\documentclass[11pt]{article}

\usepackage[letterpaper,margin=1in]{geometry}

\usepackage[utf8]{inputenc}
\usepackage[T1]{fontenc}
\usepackage{microtype}
\usepackage{amsmath,amssymb,amsthm}
\usepackage{graphicx}
\usepackage{float}
\usepackage{booktabs}
\usepackage{placeins}
\usepackage{tabularx}
\usepackage[labelfont=bf]{caption}
\usepackage[numbers,sort&compress]{natbib}
\usepackage{xcolor}
\usepackage{hyperref}
\usepackage{orcidlink}
\usepackage{xurl}
\hypersetup{colorlinks=true, linkcolor=blue!60!black,
            citecolor=blue!60!black, urlcolor=blue!60!black}

\graphicspath{{../repo/figures/}{figures/}}
 \theoremstyle{plain}
\newtheorem{theorem}{Theorem}[section]
\newtheorem{proposition}[theorem]{Proposition}
\newtheorem{lemma}[theorem]{Lemma}
\newtheorem{conjecture}[theorem]{Conjecture}
 \newcommand{\Ee}{\mathbb{E}}
\newcommand{\ket}[1]{|#1\rangle}
\newcommand{\bra}[1]{\langle #1|}
\newcommand{\braket}[2]{\langle #1 | #2 \rangle}
\newcommand{\Tr}{\mathrm{Tr}}
 
\begin{document}

\hypersetup{pageanchor=false}
\title{The optimization landscape of peaked-circuit generation}

\author{Ilyes Jamoussi\,\orcidlink{0009-0008-9098-8329}\thanks{ilyes.jamoussi@polymtl.ca} \\[0.3em]
Polytechnique Montr\'eal, Montr\'eal, Qu\'ebec, Canada}

\date{\today}

\maketitle
\thispagestyle{empty}
 \begin{abstract}
Peaked circuits are random quantum circuits whose measurement
returns one bitstring far more often than chance. The probability
of that string is the peakedness, and a verifier who knows the
string checks the device in a few shots. Peaked circuits are a
candidate route to verifiable quantum advantage, and classical
generation is the bottleneck. Aaronson and Zhang train a
variational circuit by gradient descent, and the peakedness they
reach decays exponentially with the number of qubits. They state
that either their optimizer stalls or no efficient generation
method exists. Here we show that no bare fixed base fits the decay we measure and
that neither the barren plateau nor fragmentation of the
solution set explains it. We measure the
optimization landscape on eighteen instances per size, from eight
to sixteen qubits, under a protocol whose falsifiers were fixed in
advance. The decay
steepens as the qubit number grows, leaving extrapolations to
larger devices unsupported. A better optimizer exists, but it
gains a few percent and the peakedness it reaches decays at
nearly the same rate. Aaronson
and Zhang attribute the difficulty to a barren plateau. The
plateau is present, but the second-order amplitude data are
depth-independent while the attained peakedness is not.
Independent runs land on uncorrelated solutions, yet paths
connecting them stay well above the scale of a random state, so
fragmentation fails at that scale. In the deep limit we take the
scrambled state Haar-random. There we prove that no search over
a polynomial-parameter Lipschitz family, exhaustive search
included, beats the scale of a random state by more than a
$\mathrm{poly}(n)$ factor on average. Both
branches of the Aaronson--Zhang alternative therefore stay open. A
generation method must beat a preregistered baseline, and a
hardness argument must accommodate a solution set connected at the
scale of a random state.
\end{abstract}
 
\newpage
\setcounter{page}{1}
\hypersetup{pageanchor=true}

% ---------------------------------------------------------------- body ----
\section{Introduction}
\label{sec:intro}

Random-circuit sampling is the leading experimental route to quantum
advantage, and it carries a defect. Verifying the output costs
classical time exponential in the qubit number $n$, so an experiment
large enough to defeat spoofing is too large to check. Aaronson and
Zhang proposed peaked circuits as a way around this
\cite{aaronson2024peaked}. A peaked circuit looks random but
concentrates far more weight than chance on a
single computational-basis string. A verifier who knows that string
checks the device in a handful of shots. The verification cost does
not vanish but moves to generation: someone must produce circuits
that scramble and peak at once.

\begin{figure}[H]
\centering
\includegraphics[width=\textwidth]{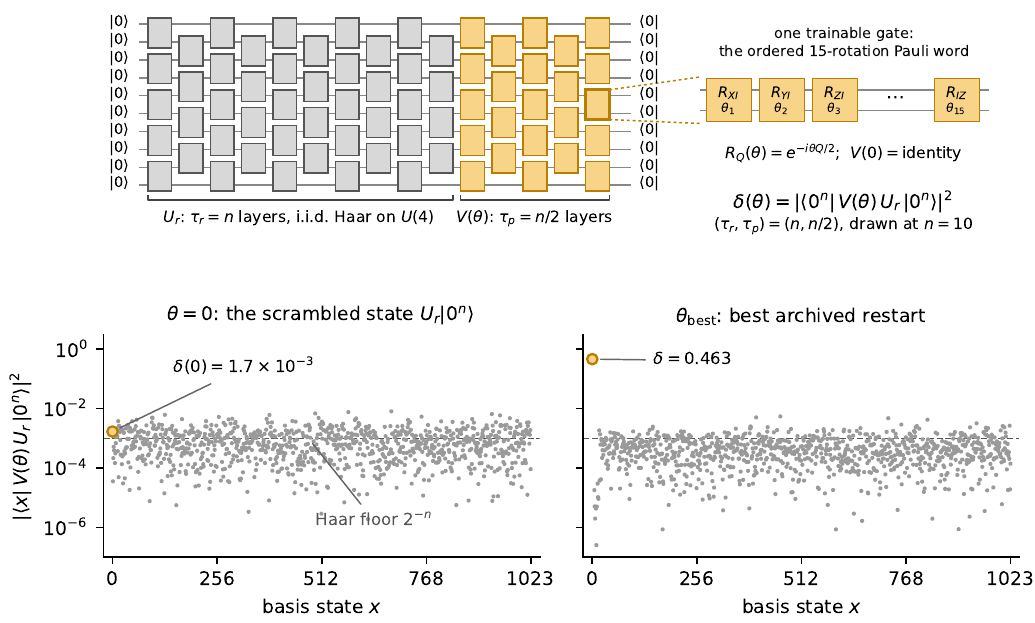}
\caption{The generation problem. \emph{Top:} the geometry of
Sec.~\ref{sec:setup}. \emph{Bottom:} the exact output distribution
of a single instance at $n = 10$. At $\theta = 0$ (left) the
scrambled state is a Porter--Thomas cloud. At the best of $32$
restarts (right) the same instance concentrates $\delta = 0.463$
on $0^n$, $474$ times the $2^{-n}$ floor (dashed), while the
remainder stays Porter--Thomas-like. Fifteen of the $1024$ strings
fall below the axis range.}
\label{fig:intro}
\end{figure}
 
Aaronson and Zhang generate peaked circuits variationally. A
brickwall of Haar-random gates $U_r$ of depth $\tau_r$ is followed by
a trainable brickwall $V(\theta)$ of depth $\tau_p$. The parameters
$\theta$ are optimized by gradient descent to maximize the peakedness
$\delta(\theta) = |\bra{0^n} V(\theta)\, U_r \ket{0^n}|^2$
(Fig.~\ref{fig:intro}). At $(\tau_r, \tau_p) = (n, n/2)$, the
peakedness they reach decays as $1.189^{-n}$, against $2^{-n}$ for a
Haar-random state \cite{aaronson2024peaked}. If that base persists,
$n = 50$ leaves $\delta \approx 5\times 10^{-4}$, accessible to
experiment. They state the alternative: either their gradient
descent stalled in local optima and a better optimizer unlocks
higher peakedness, or no efficient generation method exists. In the
second case the construction is of limited use for advantage
experiments.

\newpage
Aaronson and Zhang attribute the difficulty to a barren plateau
(gradient variance vanishing exponentially in $n$ at initialization)
\cite{mcclean2018barren}. Their Theorem~2.4 gives a depth bound:
$1/\mathrm{poly}(n)$ peakedness at random depth $\tau_r \gg n$
requires $\tau_p = \Omega\bigl((\tau_r/n)^{0.19}\bigr)$. Their
Theorem~A.1 constructs, for two random layers, an explicit single
peaking layer.

The neighboring results leave the search over $\theta$ untouched.
Estimating the peakedness of a random peaked circuit to additive
error $2^{-\mathrm{poly}(n)}$ is average-case
\#P-hard \cite{zhang2025complexity}. Deciding peakedness on a
diagonal instance (the same unknown string at input and output) is
QCMA-complete \cite{heuristic2025peaked}. Both place the problem in
complexity classes and say nothing about the search. Hardware
demonstrations and classical-simulation countermeasures are under
way \cite{ibm2026tracker,ferreira2026sparse}, and none of that
activity maps the search.

The alternative turns on none of these results. Neither the depth
bound nor the explicit construction says whether an optimizer finds
$\theta$. What separates the two branches is the optimization
landscape of $\delta(\theta)$ over the variational parameters, and
no existing result maps it. Aaronson and Zhang optimize and
extrapolate the law fitted to the output; the landscape that any
such law presupposes stays unmeasured. As long as it is unmeasured,
a stalled optimizer is indistinguishable from an absent method at
accessible sizes. The extrapolation to $n = 50$ rests on a fitted
base with no structural support. The stake is the proposal itself:
the verifier must build the circuit, so peaked-circuit advantage
stands or falls with generation.

We approach the alternative through three measurements: the reach
of optimizers, the exact statics of the landscape, and the geometry
of the solution set. The reach is the peakedness an optimizer
attains at a given budget. Three named obstructions could have
carried the difficulty, and each is confronted with the
corresponding measurement. The barren plateau is read against the
initialization statistics \cite{mcclean2018barren}. The overlap-gap
property derives hardness from clustering of near-optimal solutions
and is confronted with direct connectivity scans
\cite{gamarnik2021overlap}. Recorded trajectories confront the
trap-based obstructions of variational landscapes
\cite{anschuetz2022traps}.

In this work, we measure the landscape on which the alternative
turns. The protocol's falsifiers and their costs were fixed before
the runs existed, and we call such items registered. We call the $2^{-n}$ peakedness of a Haar-random state
the trap scale. Our contributions are the following.
\begin{enumerate}
\item We compute the exact statics of the landscape as a random
field: the finite-depth covariance kernel and the moments of
$\delta$ to fourth order (Sec.~\ref{sec:statics}). The mean is
exactly flat at every depth, so all structure resides in the
fluctuations.
\item We prove the deep-limit ceiling, taking the scrambled state
Haar-random (Sec.~\ref{sec:ceiling}). No search over a
polynomial-parameter Lipschitz family, exhaustive search included,
beats the trap scale by more than a $\mathrm{poly}(n)$ factor on
average.
\item We bound the finite-depth ceiling $\delta^*$ (the supremum of
$\delta$ at fixed peaking depth) by an entanglement budget
(Sec.~\ref{sec:deltastar}). The first-rung theorem gives the exact
maximum at one peaking layer, and its prediction is
confirmed at $10^{-6}$ on five of six points.
\item We measure the reach law at a frozen budget, fixed in
advance, and at convergence: the decay steepens with $n$
(Sec.~\ref{sec:steepening}). A fixed-base law fitted to $n \le 14$
is rejected at $p = 0.026$ frozen and $0.041$ converged, leaving the
extrapolation of Aaronson and Zhang to $n = 50$ unsupported.
\item We show a better optimizer exists: L-BFGS-B ends above
converged Adam at $n = 16$, at reach ratio $1.0389$
(Sec.~\ref{sec:robustness}). The peakedness it reaches still falls,
at local base $1.31$ on the last interval against Adam's $1.33$.
\item We read the barren plateau against the reach law
(Sec.~\ref{sec:bp}). The plateau is present at initialization, but
the exact second-order amplitude data are depth-independent while
the reach is not.
\item We map the geometry of the solution set: it does not fragment
at the trap scale (Sec.~\ref{sec:solgeometry}). Reachable solutions
are pairwise decorrelated, yet the re-optimized paths connecting
them stay $10^2$--$10^3$ times above that scale.
\item We synthesize the picture as shrinking reach without a
trap-scale overlap gap and state three conjectures, each with a
registered falsifier (Sec.~\ref{sec:hardness}). Our own measurement
triggers the falsifier of the hardness conjecture, and the
conjecture is withdrawn.
\end{enumerate}
Neither branch of the alternative closes: generation must now beat a
registered baseline, and hardness must accommodate a solution set
connected at the trap scale.

Throughout, circuits are noiseless and the ensemble is a single
brickwall family under a fixed gate parametrization. The operating
regime is $(\tau_r, \tau_p) = (n, n/2)$ unless stated. The grid is
$18$ instances per size at $n = 8$--$16$, at the frozen and at the
converged budget. A reach value is tied to a named optimizer at a
named budget. It measures the pair of landscape and algorithm, not
the landscape alone. Every reach claim is quantified over the
stable-local class of Sec.~\ref{sec:protocol}: gradient-flow-like
dynamics with $\mathrm{poly}(n)$-bounded step sizes,
$\mathrm{poly}(n)$ steps, and independent restarts.

The body develops the contributions in the order listed;
Sec.~\ref{sec:related} relates the results to prior work,
Sec.~\ref{sec:limits} collects the limitations,
Sec.~\ref{sec:discussion} concludes, and
Appendices~\ref{app:proofs}--\ref{app:cuberoot} hold the proofs, the
methods, the numerical controls, the registration record, and the
cube-root test.
 \section{Setup, conventions, and protocol}
\label{sec:setup}

\subsection{The field}
\label{sec:field}

Fix $n$ even, $n \ge 4$. The instance $U_r$ is a one-dimensional
open-boundary
brickwall circuit of depth $\tau_r$ (the random depth) whose
two-qubit gates are i.i.d.\ Haar on $U(4)$; layer $0$ covers all $n$
qubits by disjoint pairs. The variational circuit $V(\theta)$
continues the same brick pattern for $\tau_p$ layers (the peaking
depth); each gate is the ordered product of fifteen Pauli rotations
$\prod_a e^{-i\theta_{p,a} Q_a/2}$ over the fixed two-qubit Pauli
word sequence, giving the identity at $\theta = 0$, per-coordinate
$2\pi$-periodicity up to sign, and
$P = 15 \times (\text{gates})$ real parameters, $\Theta(n^2)$ at
the operating $\tau_p = n/2$. The
object of study is the random field
\begin{equation}
\delta(\theta) \;=\;
\big|\bra{0^n}\, V(\theta)\, U_r\, \ket{0^n}\big|^2
\;=\; |\braket{w(\theta)}{\varphi}|^2,
\label{eq:field}
\end{equation}
with $\ket{\varphi} = U_r\ket{0^n}$ the scrambled state,
$\ket{w(\theta)} = V(\theta)^\dagger \ket{0^n}$ the probe state, and
randomness carried by the instance; $\delta(\theta)$ is the
peakedness. Fixing the peak string to $0^n$ costs no generality: the
generation problem of Ref.~\cite{aaronson2024peaked} allows any
target $s$, and conjugating by $\bigotimes_j X_j^{s_j}$ maps $s$ to
$0^n$, leaving the Haar-random section Haar-random; the conjugation is
absorbed into the boundary layer of $V$. The variational instance
Ref.~\cite{aaronson2024peaked} optimizes numerically (their
Eq.~(9)) is $\max_\theta \delta(\theta)$ at
$(\tau_r, \tau_p) = (n, n/2)$; their generation problem itself
(their Open Problem~1.4) admits any efficient algorithm, and the
statements here quantify over this manifold only. The
instance-independent probe overlap
$F(\theta, \theta') = |\braket{w(\theta)}{w(\theta')}|^2$ is the
correlation coordinate of the field; the solution overlap
$q_{ij} = |\braket{\Psi_i}{\Psi_j}|^2$ for
$\ket{\Psi_i} = V(\theta_i) U_r \ket{0^n}$ is the ensemble
observable of the experiments and the coordinate in which solutions
are compared, equivalent to the Fubini--Study distance up to
monotone reparametrization. Table~\ref{tab:notation} collects the
principal notation.

\begin{table}[H]
\caption{Principal notation.}
\label{tab:notation}
\begin{tabularx}{\textwidth}{@{}lX@{}}
\toprule
$F(\theta, \theta')$ & the instance-independent probe overlap
  $|\braket{w(\theta)}{w(\theta')}|^2$, the correlation coordinate of
  the field (Sec.~\ref{sec:field}) \\
\midrule
$q_{ij}$ & the solution overlap $|\braket{\Psi_i}{\Psi_j}|^2$ for
  $\ket{\Psi_i} = V(\theta_i) U_r \ket{0^n}$, the ensemble observable
  of the experiments (Sec.~\ref{sec:field}) \\
\midrule
$\hat q$ & the floor-normalized pair overlap,
  $(q - \delta_i\delta_j)/(1 - \delta_i\delta_j)$
  (Sec.~\ref{sec:protocol}) \\
\midrule
trap scale & the $2^{-n}$ Haar floor of the field
  (Sec.~\ref{sec:protocol}) \\
\midrule
stable-local class & gradient-flow-like dynamics with
  $\mathrm{poly}(n)$-bounded step sizes, $\mathrm{poly}(n)$ steps,
  and independent restarts; the class as registered, containing
  Adam and plain SGD (Sec.~\ref{sec:protocol}) \\
\midrule
amended stable-local class & the registered class restricted by
  the pacing criterion
  $\lVert\Delta\theta_t\rVert = \Theta(\eta_t)$, adopted post hoc
  and used for no registered verdict (Sec.~\ref{sec:protocol}) \\
\midrule
$\delta^*$ & the supremum
  $\delta^*(\tau_p; U_r) = \sup_\theta \delta(\theta)$, a sharp
  instance quantity
  (Sec.~\ref{sec:deltastar}) \\
\midrule
$m_k$, $r_k$ & the normalized moments
  $\Ee[\delta^k]/(k!\,D^{-k})$, $D = 2^n$; the pole-free form
  $m_k / m_2^{\binom{k}{2}}$, equal to $1$ identically under pair
  dominance (Sec.~\ref{sec:moments}, Eq.~\eqref{eq:pairratio}) \\
\midrule
$\Lambda_{\rm ent}$, $\Lambda_{\rm MPS}$ & the top-$\chi_j$ Schmidt
  mass of the
  scrambled state; refined by the maximum
  $\Lambda_{\rm MPS} \le \Lambda_{\rm ent}$
  over matrix-product states of bond dimensions $\chi_j$
  (Eq.~\eqref{eq:truncation}) \\
\midrule
$R(n, B)$ & the expected best-of-$B$ peakedness under the frozen
  protocol (Sec.~\ref{sec:dynamics}) \\
\midrule
$\rho_{\rm arm}(n)$ & the reach ratio $R_{\rm arm}(n)/R_{\rm conv}(n)$
  of an optimizer arm to the converged cell
  (Sec.~\ref{sec:robustness}) \\
\midrule
$\varrho(n)$ & the corrugation floor, the median string-path minimum
  relative to the endpoint level (Sec.~\ref{sec:shelf}) \\
\midrule
$b_{\rm triv}$ & the peakedness of the instance truncated to its
  first $\tau_r - \tau_p$ random layers
  (Appendix~\ref{app:cuberoot}) \\
\bottomrule
\end{tabularx}
\end{table}
 
\subsection{Symmetries of the parametrization and of the objective}
\label{sec:gauge}

Three redundancies make the Euclidean geometry of $\theta$
physically meaningless. First, gate phases are confined to the
discrete center $\mathbb{Z}_4$ (each gate has unit determinant; a
$2\pi$ coordinate shift flips its sign), so the landscape lives on
the torus $(\mathbb{R}/2\pi\mathbb{Z})^P$. Second, the gate map
$\mathbb{R}^{15} \to SU(4)$ has generically discrete fibers (lattice
translations plus finitely many sheets), with surjectivity for the
fixed word order tested numerically. Third, the parametrization
carries a continuous gauge whose dimension is a counting, not a
measurement. Call an \emph{internal wire segment} a qubit $q$
together with two consecutive gates of $V$ acting on it, no gate
between them touching $q$. Inserting $u u^\dagger$, $u \in SU(2)$,
on $q$ between the two gates of a segment and absorbing each factor
into its endpoint gate leaves $V(\theta)$ exactly invariant while
displacing $\theta$: the factors commute with every gate they cross,
and regularity of the word map lifts the dressed gates back to
angles. Sec.~\ref{sec:solgeometry} counts these directions and
measures the count at the solutions.

\subsection{Frozen protocol and conventions}
\label{sec:protocol}

Unless stated otherwise, ensembles use one frozen protocol: Adam
\cite{kingma2015adam} at learning rate $0.05$ (decayed
$\times 0.5$ every $300$ steps), $400$ steps with an early stop
once the per-step change of $\delta$ falls below $10^{-8}$ after
step $100$, and initialization
$\theta_0 \sim \mathcal{N}(0, \sigma^2 I_P)$ with $\sigma = 0.1$.
We define the \emph{stable-local class} as gradient-flow-like
dynamics with $\mathrm{poly}(n)$-bounded step sizes,
$\mathrm{poly}(n)$ steps, and independent restarts. This is the
class as registered, and every reach claim below is quantified
over it: the protocol's Adam is in it, and so is plain SGD. Kept
strictly apart, the \emph{amended stable-local class} is the
registered class restricted by a pacing criterion: per-step
parameter displacement paced by the step-size schedule $\eta_t$,
$\lVert\Delta\theta_t\rVert = \Theta(\eta_t)$ along the run, as
preconditioned or normalized first-order methods provide. The
criterion is outcome-independent in form, reading displacements
along the trajectory and never the reached peakedness, but not in
provenance: adopted after the SGD run of Sec.~\ref{sec:robustness},
it grades no registered prediction here. An unqualified
``stable-local class'' below always denotes the registered class.
For an instance, tolerance $\varepsilon$, and restart budget $B$,
the \emph{reachable
ensemble} is the set of final points of $B$ independent restarts,
filtered to
$\delta_{\rm final} \ge (1-\varepsilon)\,\delta_{\rm best}$ with
$\delta_{\rm best}$ the batch best; a restart below this filter is
stalled. The floor-normalized pair overlap is
$\hat q = (q - \delta_i\delta_j)/(1 - \delta_i\delta_j)$, with
$\delta_i, \delta_j$ the two solutions' final peakedness values; it
removes the overlap that two solutions sharing nothing but the peak
string already have. Pairs with $\hat q \in [0.25, 0.75]$ sit in
the \emph{intermediate band}. We define the \emph{trap scale} as the $2^{-n}$
Haar floor of the field; a connectivity path whose level falls to it
carries no more signal than a fresh restart. \emph{Registered} marks
what was fixed before the runs that test it. The corrugation
protocol, statistic, and verdict rule sit in the archived
\texttt{REGISTRATION.md}, and the optimizer-class arms, statistic,
and outcome costs in the archived \texttt{REGISTRATION-CONVERGED.md}.
The moment bands of Sec.~\ref{sec:moments} and the four predictions
of Appendix~\ref{app:registration} sit in the working record and
enter the archive as the author's declaration.

Uncertainty conventions: $x \pm y$ denotes one standard error (of a
fit parameter or of an instance mean, as stated); parenthesized
digits denote the spread over instances in the last digit; error
bars in figures are standard errors of the instance mean.
\FloatBarrier
 \section{Exact statics of the field}
\label{sec:statics}

\subsection{The mean is exactly flat}
\label{sec:flatmean}

For any $\tau_r \ge 1$ and every $\theta$,
$\Ee[\delta(\theta)] = 2^{-n}$: layer $0$ covers all qubits with
independent Haar pairs, the single-pair twirl sends any input to
$I_4/4$, and unitary invariance preserves $I/2^n$ thereafter. The
landscape has no mean slope anywhere; by dominated differentiation
the mean gradient and mean Hessian vanish identically at every
depth, so all structure resides in the fluctuations.

\subsection{Second-order amplitude data}
\label{sec:amplitude}

Write $D = 2^n$. The first-moment twirl gives, at every depth
$\tau_r \ge 1$ and for the amplitude
$\psi(\theta) = \braket{w(\theta)}{\varphi}$,
\begin{equation}
\Ee|\psi(\theta)|^2 = \frac{1}{D}, \qquad
\Ee\big[\psi(\theta)\,\overline{\psi(\theta')}\big] =
\frac{\braket{w(\theta)}{w(\theta')}}{D},
\label{eq:hypotheses}
\end{equation}
the increment identity
$\Ee|\psi(\theta) - \psi(\theta')|^2 = \|w(\theta) -
w(\theta')\|^2/D$ following from the two, and phase-invariance of
every gate ensemble makes $\Ee[\psi]$ and the pseudo-covariance
vanish exactly. The true field's mean square, covariance, and
increment metric are thus exactly Haar-like at every depth.

\subsection{The exact finite-depth covariance kernel}
\label{sec:kernel}

With $A(\theta) = \ket{w(\theta)}\bra{w(\theta)}$, and primes
denoting evaluation at $\theta'$, the second moment is a two-copy
average,
\begin{equation*}
\Ee[\delta\delta'] = \Tr[(A \otimes A')\,
\Ee[(\ket{\varphi}\bra{\varphi})^{\otimes 2}]].
\end{equation*}
Each Haar gate twirls the two-copy space onto
$\mathrm{span}\{\mathrm{id}, \mathrm{swap}\}$, so per-site labels
$\pi_i \in \{e, s\}$ propagate through the brickwall with exact
local rules derived from the $2\times2$ Weingarten system at local
dimension $4$ (Corollary~2.4 of
Ref.~\cite{collins2006integration}; the brickwall transfer follows
Sec.~2, Eqs.~(16)--(18) of Ref.~\cite{hunterjones2019unitary}). A
first-layer gate emits its pair aligned, $e$ or $s$, with weight
$1/20$ each; aligned inputs pass unchanged; misaligned inputs
resolve to $ee$ or $ss$ with weight $2/5$ each. The raw rule is
positive but sub-stochastic (a misaligned input emits total weight
$4/5$); the exact invariants are the two operator pairings, and the
rule becomes an exactly stochastic Markov chain after dressing by
$4^{\#e} 2^{\#s}$. Contracting the final configuration $c$, with
$s$-region $S(c)$, against the probes gives, exactly at any depth,
\begin{equation}
\Ee[\delta(\theta)\,\delta(\theta')] =
\sum_{c\,\in\,\{e,s\}^n} W_{\tau_r}(c)\;
\Tr\!\big[\rho_{S(c)}(\theta)\,\rho_{S(c)}(\theta')\big],
\label{eq:kernel}
\end{equation}
with $W_{\tau_r}(c)$ the weight the two-copy transfer assigns to
configuration $c$ and the boundary the \emph{cross-purity} of the
two probe states on $S$. Eq.~\eqref{eq:kernel} is validated at $n = 8$ against
fresh Monte Carlo estimates at $25/25$ points across
$\tau_r \in \{1, 2, 4, 8, 16\}$, at a tolerance of two bootstrap
standard errors plus an absolute $10^{-3}$, the tightest point
passing at $2.2$ standard errors; and against $240$ further points
at $n \in \{8, 10\}$ across $\tau_r \in \{1, 2, 4, n, 2n\}$ and
three direction kinds, whose residuals in the correlation are
$-0.004 \pm 0.0035$, with no bias for any direction kind. Mixed
configurations damp at least geometrically ($4/5$ per domain wall
per layer), and the deep limit concentrates on $S = \emptyset$ and
$S = \mathrm{all}$, recovering the 2-design kernel
$\Ee[\delta\delta'] = (1 + F)/(D(D{+}1))$. In this limit the
field is the Porter--Thomas field on the variational manifold
\cite{porter1956fluctuations,boixo2018characterizing}, its full
correlation structure summarized by $F(\theta,\theta')$. At finite depth
the kernel depends on the probes through every
reduced cross-purity; a formula in $(F, n)$ alone would be wrong.

Differentiating Eq.~\eqref{eq:kernel} gives the exact gradient
covariance at initialization with the same weights and a
differentiated boundary; in the deep limit it reduces to
$\mathrm{Cov}[\partial_a\delta, \partial_b\delta] =
2 g_{ab}/(D(D{+}1))$ with $g$ the Fubini--Study metric of the probe
family.

\subsection{Third and fourth moments: a hierarchy below pair
dominance}
\label{sec:moments}

The same construction lifts to $k$ copies and computes the moments
$\Ee[\delta^k]$. Per-site labels now live in $S_k$: Gram matrix
$G_{\pi\pi'} = 4^{\#\mathrm{cyc}(\pi\pi'^{-1})}$, uniform
initialization $1/Z_k$ with $Z_k = 4\cdot5\cdots(4{+}k{-}1)$
($1/120$ at $k=3$, $1/840$ at $k=4$), transfer coefficients
$G^{-1} t$, where
$t_{\pi'} = 2^{\#\mathrm{cyc}(\alpha\pi'^{-1})}\,
2^{\#\mathrm{cyc}(\beta\pi'^{-1})}$ is the overlap of the gate's
misaligned input labels $(\alpha, \beta)$ with the permutation
$\pi'$, and boundary
$X_c = \bra{w}^{\otimes k} (\bigotimes_i \Pi_{\pi_i})
\ket{w}^{\otimes k}$, with $\Pi_\pi$ the copy-permutation operator
and $|X_c| \le 1$. For $k \ge 3$ the transfer is a \emph{signed}
measure (negative coefficients appear beyond the transposition
class), so evaluation is by exact summation with truncation
certified by the dropped $\sum |W|$; positivity is restored only at
the total moment. The permutation operators remain linearly
independent at $k = 4$ and local dimension $d = 4$; the Gram matrix first
degenerates at $k = 5$, where the transfer also carries $120$ labels
per site, and the ladder stops at the fourth moment. Self-tests cover
idempotence, trace preservation, the
Haar floors $m_k^{\rm Haar} = D^{k-1}/\prod_{j=1}^{k-1}(D{+}j)$, and
an independent Monte Carlo pipeline.

These moments are functions of the evaluation point $\theta$, not
numbers attached to the field. The values below are computed at the
protocol's initialization,
$\theta \sim \mathcal{N}(0, \sigma^2 I_P)$ with $\sigma = 0.1$: a
near-identity point, where the probe is close to $\ket{0^n}$ and
every reduced cross-purity is close to $1$; among the probe points
of Sec.~\ref{sec:probedep} only the product probe $\theta = 0$ shows
a larger excess. That initialization is the point the initialization statistics
of Sec.~\ref{sec:bp} sample.

\begin{figure}[H]
\centering
\includegraphics[width=0.55\textwidth]{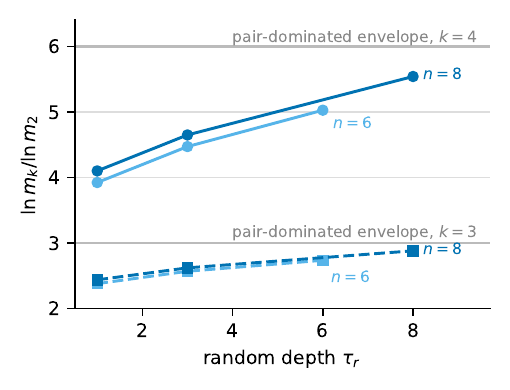}
\caption{Moment-hierarchy exponents $\ln m_k/\ln m_2$ versus random
depth $\tau_r$, for $k = 3, 4$ at $n = 6, 8$ (normalized moments
$m_k = \Ee[\delta^k]/(k!\,D^{-k})$; exact transfers). Every plotted
point carries either a complete configuration sum, at
$\tau_r \in \{1, 3\}$
for both sizes, or a truncation certificate: $0.33\%$ of $m_4$ at
$(n, \tau_r) = (6, 6)$ and $0.98\%$ at $(8, 8)$. The points $(6, 2)$ and $(8, 2)$
are excluded by their $15.9\%$ and $125\%$ certificates, and
$(6, 12)$ and $(8, 16)$ because $m_2$
sits within $1.1\%$ of the exponent's pole at $m_2 = 1$; the figure
script prints each exclusion.
The hierarchy sits
below the pair-dominated log-envelope $m_k = m_2^{\binom{k}{2}}$
everywhere measured, whose exponents are $3$ and $6$ (see text),
with the gap narrowing over the measured depths.}
\label{fig:moments}
\end{figure}
 
Fig.~\ref{fig:moments} shows the resulting hierarchy in terms of
the normalized moments $m_k = \Ee[\delta^k]/(k!\,D^{-k})$. At the
operating depth $\tau_r = n$ and at that probe: $m_2 = 1.118$,
$m_3 = 1.357$, $m_4 = 1.752$ at $n = 6$ and $m_2 = 1.164$,
$m_3 = 1.549$, $m_4 = 2.320$ at $n = 8$, with $k$-exponents
$\ln m_4/\ln m_2 = 5.03$ and $5.53$, inside the band $[4.5, 6)$
registered before either $\tau_r = n$ value existed; at $n = 8$ the
exponent lands in the upper half of the band, the half the
registration favored.

We assume, as the reference model against which the measured moments
are read, $\delta = D^{-1}\mathcal{E}\,Y$ with $\mathcal{E}$ a
standard exponential, the local Porter--Thomas factor, and $Y > 0$ a
local modulus independent of $\mathcal{E}$; the independence is the
modeling step, and the deep limit realizes it up to $O(1/D)$, with
$Y \equiv 1$ in the limit. Under it, $\Ee[\delta^k] = k!\,D^{-k}\,\Ee[Y^k]$, so
normalizing by $k!\,D^{-k}$ isolates the modulus
exactly, $m_k = \Ee[Y^k]$. The flat mean of
Sec.~\ref{sec:flatmean} fixes $\Ee[\delta] = D^{-1}$ and hence
$\Ee[Y] = 1$, and for a log-normal modulus of unit mean the higher
moments close on $m_2$ alone, that is
\begin{equation}
m_k \;=\; m_2^{\binom{k}{2}}
\label{eq:envelope}
\end{equation}
identically, with $\ln m_k/\ln m_2 = \binom{k}{2}$: $3$ at $k = 3$
and $6$ at $k = 4$, independently of the size of the excess. We call
Eq.~\eqref{eq:envelope} the pair-dominated log-envelope. A Gaussian
field in the amplitude is not the relevant reference: it makes
$\delta$ exponential, so $Y \equiv 1$, $m_k = 1$ for every $k$, and
the exponent is undefined. A companion registration on the third
moment at $(n, \tau_r) = (8, 8)$ missed its numeric band.
Appendix~\ref{app:registration} records the bands, the verdict,
and the units error behind the miss.

The exponent $\ln m_k/\ln m_2$ is convenient at the operating depth
but singular in general: $m_2$ exceeds unity at $\tau_r = n$ and
falls below it in the deep limit, where the Haar floor gives
$m_2 = D/(D{+}1)$, so the ratio has a pole wherever $m_2$ crosses
$1$ and cannot be continued from the measured depths to the deep
limit. The pole-free form of the same comparison is
\begin{equation}
r_k \;=\; m_k \big/ m_2^{\binom{k}{2}},
\label{eq:pairratio}
\end{equation}
which equals $1$ identically under pair dominance. We measure, at
the committed probe draw at $\sigma = 0.1$, $r_3 = 0.971$,
$r_4 = 0.897$ at $n = 6$ and $r_3 = 0.982$, $r_4 = 0.932$ at
$n = 8$ (three independent probe draws at $n = 6$, $\tau_r = n$,
give $m_2 = 1.118$, $1.113$ and $1.101$, a $1.5\%$ range): the
fourth moment sits $7$--$10\%$ below the log-normal prediction and
the third $2$--$3\%$, so the modulus is sub-log-normal. The
reference needs no finite-size correction, since the Haar floor
itself satisfies $r_4 = 1.0009$ at $n = 6$ and $1.00006$ at
$n = 8$: the deep limit obeys pair dominance to parts in $10^{3}$
and $10^{5}$, and the measured departure is a hundred times larger
at $n = 6$ and a thousand times at $n = 8$.

\subsection{Probe dependence of the moment excess}
\label{sec:probedep}

At $(n, \tau_r) = (8, 8)$ the exact second moment falls as the probe
is dispersed, tracking the probe's own half-chain purity:
$m_2 = 1.168$ at $\theta = 0$ (the product state, where every
reduced cross-purity is exactly $1$), then $1.164$, $1.153$,
$1.080$ and $1.062$ at $\sigma = 0.1$, $0.2$, $0.5$ and $1$,
against half-chain purities $1.00$, $0.95$, $0.84$, $0.66$ and
$0.53$; the scan rescales one committed probe direction radially, so
its scales are correlated rather than independent draws. At the
optimizer's own solutions, the points the reach measurements of
Sec.~\ref{sec:dynamics} actually visit, $m_2 = 1.036$ at half-chain
purity $0.23$ (the five best restarts of instance $0$ at $n = 8$,
coinciding at the atom of Sec.~\ref{sec:atom},
$m_2 = 1.0358$--$1.0359$). At $n = 6$ the
same scan gives
$1.1233 \to 1.1180 \to 1.0982 \to 1.0396 \to 1.0401$ with purities
$1.00 \to 0.96 \to 0.82 \to 0.54 \to 0.59$, and $m_2 = 1.021$ at
the solutions (purity $0.51$): the last scale turns the purity back
up and $m_2$ follows it, so $m_2$ tracks the purity, not the scale.

The excess over the Gaussian value is therefore $16.4\%$ at the
initialization ($16.8\%$ at the product probe) and $3.6\%$ at the
reachable solutions, and the sub-log-normal deficit moves with it.
Exact $S_3$ and $S_4$ transfers at the two ends of the
near-identity range give, at $n = 6$, $r_3 = 0.9697$ and
$r_4 = 0.8915$ at the product probe against $0.9715$ and $0.8973$ at
$\sigma = 0.1$; at $n = 8$ the truncation certificate is $\pm 0.02$
on $r_4$, so both probe points read $r_4 = 0.93 \pm 0.02$ and carry
no trend there, while $r_3$ moves from $0.9810$ to $0.9817$. Three
deficits resolve beyond their certificates: $r_3$ at both sizes, and
$r_4$ at $n = 6$, where the shift is twice its certificate. All
three shrink as the probe disperses, over a range too narrow to say
more. The third moment can be followed further, and there the effect
is large: the committed $S_3$ scan ($n = 6$, $\tau_r = 6$; an
independent probe draw, the two draws differing by $0.13\%$ on
$r_3$) gives $r_3 = 0.973$ at $\sigma = 0.1$ against $0.998$ at
$\sigma = 0.5$: the deficit shrinks with dispersion, on one draw per
scale, and what remains at $\sigma = 0.5$ is the size of that
draw-to-draw spread, so the shrinkage factor is not resolved. The fourth moment at
a solution is computed: at the same five coinciding solutions of
instance $0$ at $n = 8$ (converged ensemble,
Sec.~\ref{sec:dynamics}), the exact transfer gives $m_3 = 1.110$,
resolved at a $3\times 10^{-6}$ certificate, and
$m_4 = 1.23 \pm 0.05$, the certificate now $4.4\%$ of the value
where the initialization-point one was $\pm 0.02$ on a larger
excess; in envelope units $r_3 = 0.9985$ and
$r_4 = 0.994 \pm 0.044$. At the points the reach measurements visit,
the third-moment deficit is small and resolved and the
fourth-moment one is consistent with zero within its certificate:
the sub-log-normal deficit closes along with the excess it rides on
(\texttt{analysis/moment\_probe\_converged.log}; one instance at one
size, the atom making its five best restarts a single point). The
falsification of the Gaussian model in
Sec.~\ref{sec:bp} survives untouched, since it needs only one
point with $m_2 \neq 1$ and every point qualifies.
Sec.~\ref{sec:limits} collects what the scan does not establish.
 \section{The deep-limit ceiling: no reach beyond
\texorpdfstring{$\mathrm{poly}(n)\,2^{-n}$}{poly(n) 2\^{}-n}}
\label{sec:ceiling}

The probe family is $1$-Lipschitz in the $\ell_1$ metric,
$\|w(\theta) - w(\theta')\| \le \|\theta - \theta'\|_1$
(per-coordinate generators are halved Pauli words), and the
landscape lives on the torus $[-\pi, \pi]^P$.

\begin{proposition}[Deep-limit ceiling]
\label{prop:ceiling}
Let $\ket{\varphi}$ be Haar on the unit sphere of $\mathbb{C}^D$,
$D = 2^n$, $n \ge 2$. Let $\theta \mapsto \ket{w(\theta)}$ be any
probe family on $[-\pi, \pi]^P$, $P \ge 2$, that is $1$-Lipschitz
in the $\ell_1$ metric, and write
$\delta(\theta) = |\braket{w(\theta)}{\varphi}|^2$. There is a
universal constant $C$ (one may take $C = 7$) such that
$\Ee_\varphi\big[\max_\theta \delta(\theta)\big] \le
C\, P\,(n + \ln P)\, 2^{-n}$.
\end{proposition}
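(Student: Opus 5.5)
Our route is a standard net argument: discretize the parameter torus, apply a union bound with the exact Haar tail on the net, and control the discretization error with the Lipschitz property. For a fixed unit vector $\ket{w}$ and Haar $\ket{\varphi}$, the overlap $|\braket{w}{\varphi}|^2$ is $\mathrm{Beta}(1, D-1)$, so
\begin{equation*}
\Pr\big[|\braket{w}{\varphi}|^2 \ge t\big] = (1-t)^{D-1} \le e^{-(D-1)t}.
\end{equation*}
Place a grid on $[-\pi,\pi]^P$ with spacing $2\epsilon/P$ per coordinate. Every $\theta$ then has a grid point $\theta_j$ with $\|\theta - \theta_j\|_1 \le \epsilon$, and by the $\ell_1$-Lipschitz hypothesis $\|w(\theta) - w(\theta_j)\| \le \epsilon$. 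The grid has $N \le (\pi P/\epsilon + 1)^P$ points. This grid size is the only place where $P$ enters as an exponent.

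To pass from the grid to the continuum, we will work with amplitudes rather than with $\delta$ itself. Since $|\braket{w(\theta)}{\varphi}| \le |\braket{w(\theta_j)}{\varphi}| + \epsilon$, we get
\begin{equation*}
\max_\theta \delta(\theta) \le 2\max_j \delta(\theta_j) + 2\epsilon^2.
\end{equation*}
For the net maximum we integrate the tail with a union bound:
\begin{equation*}
\Ee\max_j \delta(\theta_j) \le t_0 + N\!\int_{t_0}^\infty e^{-(D-1)t}\,dt.
\end{equation*}
Choosing $t_0 = \ln N/(D-1)$ makes the remainder $1/(D-1)$, so
\begin{equation*}
\Ee\max_j \delta(\theta_j) \le \frac{\ln N + 1}{D-1}.
\end{equation*}

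The choice of $\epsilon$ is where some care is needed. The discretization term $2\epsilon^2$ must fall below $P\,2^{-n}$, so we take $\epsilon^2 = 2^{-n}$, i.e.\ $\epsilon = 2^{-n/2}$. Then
\begin{equation*}
\ln N \le P \ln\big(\pi P\,2^{n/2} + 1\big) \le P\big(\tfrac{n}{2}\ln 2 + \ln P + \ln(\pi+1)\big).
\end{equation*}
This is at most $c\,P(n + \ln P)$ for a small absolute constant $c$, because $n \ge 2$ and $P \ge 2$ let the additive constant be absorbed into $n + \ln P \ge 2 + \ln 2$.

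Collecting terms and using $D-1 \ge D/2$ gives
\begin{equation*}
\Ee\max_\theta \delta \le \frac{4\big(cP(n + \ln P) + 1\big)}{D} + \frac{2}{D}.
\end{equation*}
This yields $C\,P(n+\ln P)\,2^{-n}$. Keeping the constants tight should bring $C$ to $7$, but we would verify that arithmetic rather than rely on it.

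The main obstacle is making the Lipschitz step commute with squaring. A naive bound on $|\delta(\theta) - \delta(\theta_j)|$ of size $2\epsilon$ would force $\epsilon \sim 2^{-n}$. That still yields $\ln N = O(P(n + \ln P))$, so the rate survives, but the constants get worse. The amplitude-level inequality $(a+b)^2 \le 2a^2 + 2b^2$ avoids this. We would also check that the torus periodicity, with the sign flips, does not obstruct the covering. It does not: the covering is done on the cube $[-\pi,\pi]^P$ directly, and $\delta$ is insensitive to global signs.
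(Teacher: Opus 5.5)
Your proposal is correct and follows the paper's proof step for step. It uses the same exact $\mathrm{Beta}(1,D-1)$ tail, the same $\ell_1$ grid of spacing $2h/P$ with $N \le (\pi P/h+1)^P$ at $h=2^{-n/2}$, the same amplitude-level transfer $\delta(\theta)\le 2\delta(\theta_{\rm net})+2h^2$, and the same union-bound integral $(\ln N+1)/(D-1)$. The constant does close at $7$: the paper uses $\ln N\le\tfrac32 P(n+\ln P)$ and $1/(D-1)\le\tfrac43 2^{-n}$ for $n\ge2$ to reach $2^{-n}\bigl[4P(n+\ln P)+\tfrac{14}{3}\bigr]\le 7P(n+\ln P)2^{-n}$, and your cruder $D-1\ge D/2$ also reaches $7$ provided you keep your sharper $\ln(\pi+1)$, which permits $c\approx1.05$.
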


The proof is a net argument: the pointwise tail is exactly
$\mathrm{Beta}(1, D{-}1)$, an $\ell_1$-net of radius $2^{-n/2}$ has
$\ln(\text{cardinality}) \le \tfrac{3}{2} P (n + \ln P)$, the
off-net transfer costs a factor $2$ plus $2^{1-n}$, and the union
bound integrates against the exponential tail
(Appendix~\ref{app:proofs}). No Gaussianity is assumed; the only
idealization is taking $\ket{\varphi}$ Haar-random, the search
space remaining the $P$-parameter manifold. In the deep limit,
generation beyond $\mathrm{poly}(n)\, 2^{-n}$ peakedness is
impossible in expectation over the instance, for any algorithm
searching that manifold, exhaustive search included. The bound
says nothing about peaking unitaries of exponential size, where the
unitary carrying $\ket{\varphi}$ to $\ket{0^n}$ attains
$\delta = 1$.

In the $C = 7$ form of
Proposition~\ref{prop:ceiling} the bound is not numerically binding
on the accessible grid: it reads $2.19$ at $n = 16$ and first falls
below unity at $n = 18$ ($0.77$). The constant is not optimized:
evaluated with the exact net cardinality and the Lipschitz constant
$1/2$ that the generators give, the same chain crosses unity at
$n = 14$, inside the measured grid. The separation from the
measured reach (Sec.~\ref{sec:dynamics}) remains one of scaling
form, not of magnitude.

Eq.~\eqref{eq:hypotheses} puts the true field's mean square at
$1/D$ and bounds its increments. Lemma~\ref{lem:gaussian} caps the
expected maximum of every Gaussian field carrying those data.

\begin{lemma}[Gaussian comparison]
\label{lem:gaussian}
Let $\psi(\theta)$, $\theta \in [-\pi,\pi]^P$, be any centered
separable complex Gaussian field with
$\Ee|\psi|^2 \le C_1 2^{-n}$ and increment metric bounded by
$C_2\, 2^{-n/2}\, \|\theta - \theta'\|_1$. Then
$\Ee\big[\sup_\theta |\psi(\theta)|^2\big] \le
C_0\, C_1\, P \ln\!\big(2 + C_2 P/\sqrt{C_1}\big)\, 2^{-n}$ with
$C_0$ universal and not made explicit here; for $C_1$ and $C_2$ of
order one, as they are for the true field, this is
$C_0\, P \ln(2{+}P)\, 2^{-n}$.
\end{lemma}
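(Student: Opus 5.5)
The plan is to bound the supremum by chaining, applying Dudley's entropy integral to the real and imaginary parts separately. Write $\psi = X + iY$ with $X, Y$ real, centered, separable Gaussian fields. Their canonical metrics $d_X(\theta,\theta') = (\Ee|X(\theta)-X(\theta')|^2)^{1/2}$, and likewise $d_Y$, are dominated by the complex increment metric, hence by $C_2 2^{-n/2}\|\theta-\theta'\|_1$. Their pointwise variances are at most $C_1 2^{-n}$. Since $\sup|\psi|^2 \le 2\sup X^2 + 2\sup Y^2$, it suffices to bound $\Ee\sup_\theta X(\theta)^2$ for a real field satisfying these two hypotheses.

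For the real field, I would first bound $\Ee\sup|X|$ by Dudley, $\Ee\sup|X| \le \Ee|X(\theta_0)| + K\int_0^{\sigma}\sqrt{\ln N(\epsilon)}\,d\epsilon$. Here $\sigma = 2\sqrt{C_1}\,2^{-n/2}$ bounds the $d_X$-diameter, and $N(\epsilon)$ is the $d_X$-covering number of the torus. The Lipschitz bound converts this into an $\ell_1$ covering of $[-\pi,\pi]^P$ at radius $r = \epsilon\,2^{n/2}/C_2$. Covering by cubes of side $2r/P$ gives $N(\epsilon) \le \max(1, \lceil \pi P/r\rceil)^P$, so $\ln N(\epsilon) \le P\ln(1 + \pi C_2 P\, 2^{-n/2}/\epsilon)$.

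Substituting $\epsilon = \sqrt{C_1}\,2^{-n/2}u$ makes the scale $2^{-n/2}$ factor out. The entropy integral becomes $\sqrt{C_1}\,2^{-n/2}\sqrt{P}\int_0^2 \sqrt{\ln(1+\pi C_2 P/(\sqrt{C_1}u))}\,du$. The remaining integral is bounded by a constant times $\sqrt{\ln(2 + C_2P/\sqrt{C_1})}$, using $\int_0^2\sqrt{\ln(1+a/u)}\,du \lesssim \sqrt{\ln(2+a)}$, which one checks by splitting at $u = 1$. This gives $\Ee\sup|X| \lesssim \sqrt{C_1 P \ln(2 + C_2P/\sqrt{C_1})}\,2^{-n/2}$.

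The step I expect to need care is passing from the first moment to the second moment of the supremum. Borell--TIS handles it: $\sup|X|$ concentrates around its mean with Gaussian tails of variance at most $\sup\mathrm{Var} \le C_1 2^{-n}$. Hence $\Ee(\sup|X|)^2 \le 2(\Ee\sup|X|)^2 + O(C_1 2^{-n})$. The additive term is absorbed into the leading one because $P\ln(2+\cdot) \ge \ln 2$.

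Separability is what makes the supremum measurable and lets us reduce to countable dense subsets. The constants $K$, the cube-covering factor, and the Borell--TIS factor all collect into $C_0$. The case $C_1, C_2 = O(1)$ then reads $C_0 P\ln(2+P)\,2^{-n}$ directly.
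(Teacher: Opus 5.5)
Your proposal is correct and follows essentially the paper's proof. Both arguments use Dudley's entropy integral with a grid covering obtained from the $\ell_1$-Lipschitz increment bound, the rescaling $\epsilon=\sqrt{C_1}\,2^{-n/2}u$ that factors out the $2^{-n/2}$ scale, and Gaussian concentration (your Borell--TIS, the paper's Theorem~5.2.2) for the second moment. The one difference is minor: you split $\psi$ into real and imaginary parts, whereas the paper lifts $\psi$ to the real process $\mathrm{Re}(e^{-i\phi}\psi(\theta))$ on the parameter cube times a phase circle, which bounds $\sup_\theta|\psi|$ directly.
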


The proof lifts to a real field on the phase circle, then applies
Dudley's entropy bound and Gaussian concentration (Theorems~8.1.3
and~5.2.2 of Ref.~\cite{vershynin2018high}). Chaining removes the
factor $n$, and it needs sub-Gaussian increments, not Gaussianity.
In the deep limit the true field supplies them through the Beta
tail, so the $P\ln(2{+}P)$ order bounds the true field's maximum
there as well, with a constant not made explicit
(Appendix~\ref{app:proofs}). The gap to
Proposition~\ref{prop:ceiling}'s $P(n + \ln P)$ measures that
proposition's single-scale union bound; what the proposition keeps
is the explicit constant.
 \section{The finite-depth ceiling \texorpdfstring{$\delta^*$}{delta*}}
\label{sec:deltastar}

At finite $\tau_p$ the supremum $\delta^*(\tau_p; U_r) = \sup_\theta
\delta(\theta)$ is a sharp instance quantity (Sec.~\ref{sec:atom}), and
structural constraints of the variational manifold replace the
deep-limit ceiling of Sec.~\ref{sec:ceiling}.

\subsection{An entanglement-budget bound}

A two-qubit gate has operator Schmidt rank at most $4$ across any cut it
straddles, so a cut $j$
crossed by $\nu(j)$ peaking gates obeys $\mathrm{rank}_j\, w(\theta) \le
\chi_j = \min(4^{\nu(j)}, 2^{\min(j, n-j)})$, and by Eckart--Young
\cite{eckart1936approximation}
\begin{equation}
\delta^* \;\le\; \Lambda_{\rm ent} \;=\; \min_j\, \sum_{i \le \chi_j}
\lambda_i^{(j)}(\varphi),
\label{eq:truncation}
\end{equation}
the top-$\chi_j$ Schmidt mass of the scrambled state, refined by the
maximum $\Lambda_{\rm MPS} \le \Lambda_{\rm ent}$ over matrix-product
states of bond dimensions $\chi_j$
\cite{vidal2003efficient,oseledets2011tensor}; alternating least
squares approaches that maximum from below, so the quoted
$\Lambda_{\rm MPS}$ understates a valid upper bound. Since
$\nu(j) \approx
\tau_p/2$, the bound is vacuous at the operating $\tau_p = n/2$ and
bites below it, where the manifold binds first: at $\tau_p = n/4$
(measured at $n = 8$) the reached
value sits at $0.44$--$0.46$ of $\Lambda_{\rm ent}$, and on the
reference instance (instance $0$ at $n = 8$) the reached
$\delta(\tau_p = 2) \approx 0.36 \ll \Lambda_{\rm MPS} \approx 0.75 <
\Lambda_{\rm ent} \approx 0.78$.

\subsection{The atom and the first-rung theorem}
\label{sec:atom}

At shallow peaking depth the protocol's endpoint is an ``atom'': at
$(n, \tau_p) = (8, 2)$ the ten best of $60$ independent restarts agree
with the batch best to $\sim 10^{-7}$ (spreads $2.7\times10^{-8}$,
$1.2\times10^{-7}$, $3.0\times10^{-8}$ over the three instances),
across decorrelated maximizers; the full $60$-restart ensemble spans at
most $5.1\times10^{-3}$. An atom certifies the reproducibility of the
protocol's endpoint, not its optimality: at $(8, 1)$, instance $0$,
an atom of width $2.2\times10^{-8}$ sits $7.2\times10^{-4}$ below
the exact maximum of Theorem~\ref{thm:firstrung}. The endpoint is
highly degenerate and instance-sharp, and where ground truth exists
its bias is one-signed: a measurable target for the theory of
$\delta^*$, from below.

The atom is a property of $n = 8$ and does not survive size. The
relative spread of the ten best restarts, median over instances, reads
$5.9\times10^{-7}$ at $n = 8$, where $13$ of the $18$ instances are
atomic below $10^{-5}$, and then $1.8\times10^{-2}$,
$3.1\times10^{-2}$, $4.8\times10^{-2}$ and $6.0\times10^{-2}$ at
$n = 10$--$16$, the last on the first four instances
($3.7\times10^{-2}$ on all eighteen), where no instance is. So $\delta^*$ at the operating
regime is measured at $n = 8$ from below, by the atom; at the one
configuration with ground truth, $\tau_p = 1$, the protocol's
one-signed bias reaches $4.4\times10^{-3}$ relative at instance $0$
and stays below $10^{-6}$ at the six other points. Beyond $n = 8$, $\delta^*$ is
unmeasured, and this section's theory is anchored at one size. At the
first rung the theory closes:

\begin{theorem}[First rung]
\label{thm:firstrung}
For even $\tau_r$, $\delta^*(\tau_p = 1)$ is at most the maximal
squared overlap of the scrambled state with a state product across
the two-qubit blocks of the peaking layer's partition, the squared
entanglement eigenvalue $\Lambda_{\max}^2$ of
Ref.~\cite{wei2003geometric} for that partition, with equality if
the block orbit $\{V_p(\theta_p)^\dagger\ket{00}\}$ is dense in the
unit sphere of $\mathbb{C}^4$ up to phase.
\end{theorem}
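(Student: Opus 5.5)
Because the peaking layer continues the brick pattern of $U_r$, everything turns on which partition that single layer uses. The plan is to show that for even $\tau_r$ the probe state $\ket{w(\theta)}$ is a product state across the two-qubit blocks of a perfect pairing. The upper bound is then the definition of $\Lambda_{\max}^2$, and equality is a density-plus-continuity argument.

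\emph{Step 1 (parity of the layer).} By the convention of Sec.~\ref{sec:field}, layer $0$ covers all $n$ qubits by the disjoint pairs $(1,2),(3,4),\dots,(n{-}1,n)$. Odd layers use the shifted pairing $(2,3),\dots,(n{-}2,n{-}1)$, which leaves the two boundary qubits idle under open boundaries. The random circuit occupies layers $0,\dots,\tau_r-1$, so the single peaking layer has index $\tau_r$. For even $\tau_r$ it is therefore a layer of the layer-$0$ type: $n/2$ gates on disjoint blocks covering every qubit. Hence $V(\theta) = \bigotimes_{p=1}^{n/2} V_p(\theta_p)$, and
\begin{equation*}
\ket{w(\theta)} = V(\theta)^\dagger\ket{0^n} = \bigotimes_{p=1}^{n/2} V_p(\theta_p)^\dagger \ket{00},
\end{equation*}
a unit vector that is product across the block partition. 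This is the step I expect to need the most care. One must verify against the brick convention that it is the even case that aligns with layer $0$. For odd $\tau_r$ the boundary qubits are untouched, so the probe is pinned to $\ket{0}$ there and the comparison is to a different, constrained partition.

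\emph{Step 2 (upper bound).} For every $\theta$, $\delta(\theta) = |\braket{w(\theta)}{\varphi}|^2 \le \sup\{|\braket{\chi}{\varphi}|^2 : \ket{\chi} = \bigotimes_p \ket{\chi_p},\ \ket{\chi_p}\in\mathbb{C}^4 \text{ unit}\}$. By Ref.~\cite{wei2003geometric}, the right-hand side is $\Lambda_{\max}^2$ for this partition. The supremum is attained, since the set of product states is a continuous image of a product of compact spheres. Taking $\sup_\theta$ gives $\delta^*(1) \le \Lambda_{\max}^2$.

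\emph{Step 3 (equality under density).} Fix a maximizer $\bigotimes_p\ket{\chi_p^\star}$. Suppose each block orbit $\{V_p(\theta_p)^\dagger\ket{00}\}$ is dense in the unit sphere of $\mathbb{C}^4$ modulo phase. Then for any $\epsilon>0$ there are $\theta_p$ and phases $e^{i\alpha_p}$ with $\|V_p(\theta_p)^\dagger\ket{00} - e^{i\alpha_p}\ket{\chi_p^\star}\| < \epsilon$. A telescoping bound on tensor products of unit vectors gives $\|w(\theta) - e^{i\sum_p\alpha_p}\bigotimes_p\ket{\chi^\star_p}\| \le (n/2)\,\epsilon$. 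Global phases drop out of $|\braket{\cdot}{\varphi}|^2$, and that map is $2$-Lipschitz on unit vectors. So $\delta(\theta) \ge \Lambda_{\max}^2 - n\epsilon$. Since $\epsilon$ is arbitrary, $\delta^* = \sup_\theta\delta(\theta) = \Lambda_{\max}^2$. The parameter torus is compact and $\delta$ is continuous, so this supremum is in fact attained.
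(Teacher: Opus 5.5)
Your proposal is correct and follows the paper's proof essentially step for step. The single even-indexed peaking layer makes $\ket{w(\theta)}$ a product across the brick pairs, which gives the unconditional bound by compactness of the block-product set, and block-wise density plus continuity gives equality. Your telescoping estimate $\delta(\theta) \ge \Lambda_{\max}^2 - n\epsilon$ just makes explicit the paper's appeal to a supremum of a continuous function over a dense subset of a compact set.
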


\emph{Proof sketch.} At $\tau_p = 1$ the peaking circuit is a single
brick layer, a tensor product of independent two-qubit gates on the
$n/2$ disjoint brick pairs, so the variational image $\{w(\theta)\}$
lies in the block-product states and the bound is the maximum over that
set, attained by compactness. Density of the block orbit in the unit
sphere of $\mathbb{C}^4$ turns that inclusion into a density of sets
and the bound into an equality, a supremum over a dense set equaling
the maximum over its compact closure. The hypothesis is weaker than
surjectivity of the gate map onto $SU(4)$; it is tested numerically
for the fixed word order (Sec.~\ref{sec:gauge}) and not proved: a
modeling hypothesis of the equality, as Gaussianity is of
Lemma~\ref{lem:gaussian}. Appendix~\ref{app:proofs} gives the
details.

The prediction is confirmed at
$10^{-6}$ on five of six points: the block-ALS $\Lambda_{\max}^2$ of the
scrambled state ($\tau_r = n$, twelve ALS restarts per point,
converged in that count by Appendix~\ref{app:controls}) and the
best $\tau_p = 1$ restart agree to $10^{-6}$ at $n = 6$, instances
$0$--$2$ ($0.410272$ at instance $0$), and at $n = 8$, instances
$1$--$2$. At $n = 4$ a closed form from the top Schmidt vector
matches to $6\times10^{-8}$. At $n = 8$, instance $0$, the protocol
falls $7.2\times10^{-4}$ short. A later control resolves
that point exactly. Inverting the fifteen-rotation word block by
block reaches the ALS-optimal product state to $5\times10^{-16}$
per block, and $\delta$ at the assembled parameters equals
$\Lambda_{\max}^2$ to $10^{-15}$
(\texttt{analysis/firstrung\_inversion.log}). The equality branch
holds at all six points; what fails at the sixth is the frozen
protocol, whose ten best of $60$ restarts agree to
$2.2\times10^{-8}$ while ending $7.2\times10^{-4}$ below the
constructed maximum. The shortfall is one-signed at every point
measured: the protocol ends at or below the block-product value
$\Lambda_{\max}^2$, never above.
 \section{Reach dynamics}
\label{sec:dynamics}

The consolidated campaign measures the fixed protocol's reach:
$18$ instances per size at $n = 8$--$14$ and, through a registered
catch-up, at $n = 16$ ($200$ restarts each), fresh $800$-restart
ensembles at $n = 10, 12$, deep anchors at three instances
(Sec.~\ref{sec:bp}),
optimizer-robustness controls, and connectivity scans at matched
resolution (Sec.~\ref{sec:shelf}). Our published $n = 16$ values
rest on the first four
instances; per the registration, every $n = 16$ number below is
reported on both that set and all eighteen, whichever way the
change goes. $R(n, B)$ denotes the expected best-of-$B$ peakedness
under the frozen protocol.

The protocol stops at $400$ Adam steps or at the convergence
tolerance, whichever comes first; the cap binds unevenly across
sizes: at $n = 8$, $61\%$ of restarts reach it and only $3$ of
the $18$ instance-best restarts do, whereas at $n \ge 10$,
$94$--$99\%$ of restarts and \emph{every} instance-best restart
do. The reach reported at $n \ge 10$ is thus a truncated
optimizer's, not the protocol's at convergence; the cost is
measured in Appendix~\ref{app:controls} by quadrupling the cap on
the same restart seeds, and settled by the registered converged
campaign, which reran the full $18$-instance grid at $n = 8$--$16$
to optimizer convergence, acceptance criterion met at every size.
Convergence shifts the level of the reach curve and bends the
steepening statistic by ten percent; the fixed-base rejection
stands (Sec.~\ref{sec:steepening}).

\subsection{The reach law steepens}
\label{sec:steepening}

The fixed-base summary over the full grid (Fig.~\ref{fig:scaling})
is $1.219$ per qubit; its standard error is $0.0064$ under the
fixed-base model and $0.017$ once scaled by the misfit
$\sqrt{\chi^2/\mathrm{dof}}$. The scaled error is the one we
quote, since the fit rejects its own model
($\chi^2/\mathrm{dof} = 7.1$ on three degrees of freedom, against
$0.3$ on the original three-instance grid, whose
$1.195 \pm 0.011$ the full-grid summary supersedes). On all eighteen $n = 16$
instances the summary reads $1.224 \pm 0.018$ with
$\chi^2/\mathrm{dof} = 8.7$: the catch-up moves every full-grid
statistic away from the fixed base. The local decay rate grows
monotonically across the
grid: base $1.16$ to $1.28$ on the $18$-instance grid
($n \le 14$), to $1.31$ only through the four-instance $n = 16$
interval, the grid's most truncated ($1.32$ on all eighteen). The
steepening stands on the $18$-instance grid alone, without
extrapolation and without the $n = 16$ point: the log-step of the
mean-of-best per two qubits rises from $0.296 \pm 0.034$ to
$0.487 \pm 0.055$ between the first and third intervals, a
$3.0$-standard-error increase, and a fixed-base law fitted to
$n \le 14$ is rejected by the data it fits ($p = 0.026$ under the
covariance-estimated exact test, Appendix~\ref{app:methods}). The
rejection is not the budget's: rerun to optimizer convergence under
the registered protocol, the same grid rejects the same law at
$p = 0.041$, base $1.200 \pm 0.008$, local bases $1.16$, $1.20$,
$1.27$ (\texttt{analysis/reach\_exact\_test.log},
\texttt{analysis/converged\_reach.log}). The two budgets are
compared through one estimator, read at the same $B_0 = 16$ and
fixed by the registration before these data existed.

\begin{figure}[H]
\centering
\includegraphics[width=\textwidth]{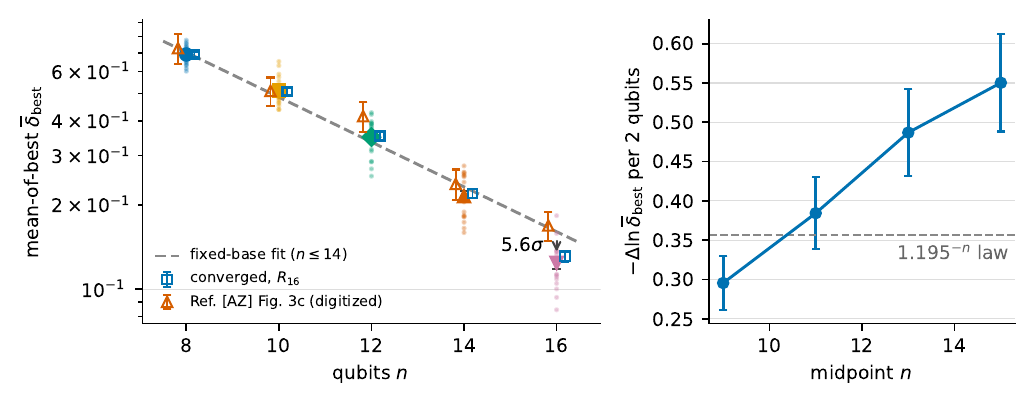}
\caption{\emph{Left:} mean-of-best peakedness versus $n$ ($18$
instances per size, the full grid including the registered
$n = 16$ catch-up; the published $n = 16$ values rest on the first
four instances, and the text reports both readings; small dots:
per-instance bests; error bars: standard error of the instance
mean). The dashed line is the weighted fixed-base fit to
$n \le 14$; the arrow marks the shortfall of the plotted
$n = 16$ point ($5.6\sigma$ on all eighteen instances), before the
step-cap correction of the text. Open squares: the registered
converged grid at the matched $B_0 = 16$ estimator, the level
with the step cap removed. Open triangles: the five
instance-averaged points of Ref.~\cite{aaronson2024peaked}
Fig.~3c, $\tau_p = \tau_r/2$, digitized from their published
raster (axes calibrated on the tick marks, each data point read at
the center of its marker's bounding box;
\texttt{analysis/az\_fig3c\_digitize.py});
their error bars are the digitization's reading uncertainty, the
half-height of the marker, not their statistical error, which the
raster does not resolve. Their series is instance-averaged with
the instance count stated for their panels a and b only, against
best-of-$200$ restarts on $18$ instances here, so the two series
measure the same regime under different search budgets.
\emph{Right:} log-steps of the
mean-of-best per two qubits; the horizontal line marks the fixed
$1.195^{-n}$ law fitted on the original three-instance grid.}
\label{fig:scaling}
\end{figure}
 
The measured $n = 16$ point falls below the fixed-base
extrapolation of the smaller sizes, where the step cap binds
hardest: correcting for the deficit of Appendix~\ref{app:controls}
moves it from $0.126$ to $0.139$ against an extrapolation of
$0.161$. On all eighteen instances the point reads $0.124$ raw
and $0.138$ corrected, against the same extrapolation, which rests
on $n \le 14$ and does not move. The shortfall below our own
extrapolation
cannot be read as a landscape property independently of the
truncation caveats that follow. On
$\ln\overline\delta$, the fitted quantity, the shortfall is $4.4$
standard errors of the instance mean raw, $2.5$ corrected;
propagating the extrapolation's own uncertainty lowers these to
$3.4$ and $2.0$. On all eighteen instances every figure grows:
$5.6$ raw and $3.4$ corrected on the instance mean, $4.1$ and
$2.4$ propagated. The deficit correcting the point rests on a
single instance, and the $1600$-step control that measures it is
itself truncated from $n = 10$ on (Appendix~\ref{app:controls}), so
the corrected shortfall is an upper bound on what convergence
would leave. All four published instances fall below the
extrapolation individually, the most favorable by $1.3$ sample
standard deviations on $\ln\delta$ over the four per-instance
bests; on all eighteen, seventeen fall below while the widened
instance spread carries the single most favorable one $0.7$ sample
standard deviations above it: the shortfall concerns the instance
mean, not every draw. At fixed protocol budget, no bare fixed base fits
the reachable peakedness over the measured grid. The wall shows
without a fit in the stalled fraction, the share of restarts
finishing below $(1-\varepsilon)$ of the batch best at
$\varepsilon = 0.1$: $0.005$, $0.13$, $0.44$, $0.62$ at
$n = 8$--$14$ and $0.77$ at $n = 16$ ($0.65$ on all eighteen),
with wide instance-to-instance spread at the larger sizes ($0.07$
to $0.99$ at $n = 14$).

The step cap does not produce this curvature; the statement is
measured, not argued: on the converged grid the registered
steepening statistic (the difference of the last and first
log-steps on $n \le 14$) reads $+0.171 \pm 0.064$ against
$+0.190$ at the frozen budget, so truncation accounts for ten
percent of the steepening and leaves a $2.7$-standard-error rise
(\texttt{analysis/converged\_reach.log}). An earlier defense
corrected the frozen statistic by the measured per-size deficits
of Appendix~\ref{app:controls} and rested on their log-linearity;
the registered test rejects that linearity, and the correction
argument is withdrawn as its registration requires.
Truncation does change the level, hence the $n = 16$ shortfall,
the deficit being largest exactly there. The converged grid now
carries its own $n = 16$ row, $18$ instances with the acceptance
criterion met (Appendix~\ref{app:controls}), and the shortfall
survives at a scale the residual truncation does not approach. The
converged point reads $0.1312 \pm 0.0061$ at the matched
$B_0 = 16$ estimator against $0.1635$ extrapolated from the
converged $n \le 14$ fit, $19.8\%$ below, $4.7$ standard errors of
the instance mean and $3.5$ with the extrapolation's own
uncertainty propagated, local base of the last interval $1.295$,
full-grid steepening $+0.215 \pm 0.071$
(\texttt{analysis/converged\_reach.log}). What these data exclude
is a bare fixed base, not every exponential reading: a fixed base
under a growing prefactor, $\delta \propto n^{2.8}\,1.58^{-n}$,
reproduces the measured rise of the local base within the quoted
interval errors and is not excluded by four intervals. The
$n = 50$ estimate of Ref.~\cite{aaronson2024peaked} fails under
either reading, since a base of
$1.58$ also destroys it.

\subsection{Restart budgets}
\label{sec:budget}

\begin{figure}[H]
\centering
\includegraphics[width=\textwidth]{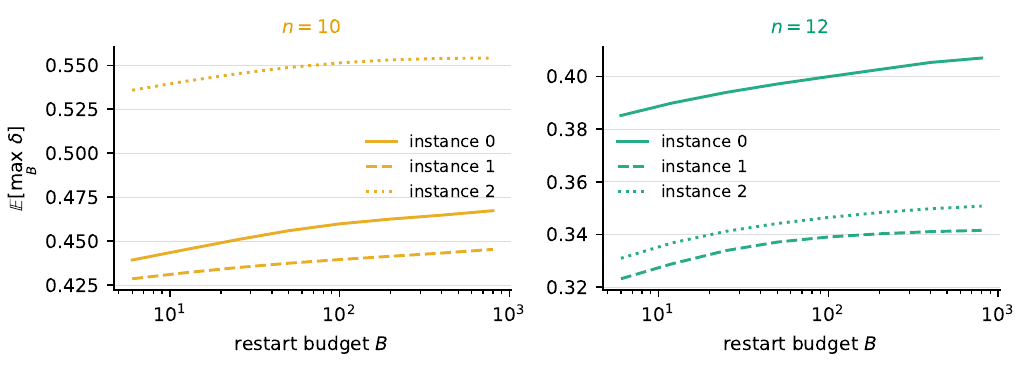}
\caption{Expected best-of-$B$ curves (the without-replacement
estimator of Eq.~\eqref{eq:bestofb}
over the $800$-restart ensembles) for three instances each at
$n = 10, 12$. The log-budget slope persists to $B = 800$ at
$0.61$--$0.63\times$ its own $[25, 200]$ reference, $\approx 9\times$
the i.i.d.-exponential null at $n = 12$ on these ensembles, and the
maximum stays unique (degeneracy $1$--$2$ per $800$ at $10^{-4}$).}
\label{fig:budget}
\end{figure}
 
Against the wall being a restart-budget artifact, the
$800$-restart ensembles (Fig.~\ref{fig:budget}) show the
log-budget slope of $R(n, B)$ persisting at the same order to
$B = 800$. The comparator is the i.i.d.-exponential null, in which
best-of-$B$ is the maximum of independent Porter--Thomas cell
values of mean $2^{-n}$, so the log-budget slope is exactly
$2^{-n}$. Best-of-$B$ is estimated exactly, without replacement,
from the ordered final values
$\delta_{(1)} \le \dots \le \delta_{(N)}$ of an ensemble,
\begin{equation}
\Ee\big[\max \text{ over a random } B\text{-subset}\big] =
\sum_{i} \delta_{(i)}\, \binom{i-1}{B-1} \Big/ \binom{N}{B}.
\label{eq:bestofb}
\end{equation}
On the consolidated grid (fit window $B \in [25, 200]$) the median
measured slope is $14\times$ this null at $n = 12$, $51\times$ at
$n = 14$ and $174\times$ at $n = 16$ ($111\times$ on all
eighteen). On the fresh $800$-restart ensembles the pooled
$[100, 800]$ slope is $0.63$ of the $[25, 200]$ slope of the
\emph{same} ensembles at $n = 12$ and $0.61$ at $n = 10$, still
$\approx 9\times$ the null at $n = 12$, with no re-emerging
degenerate maximum. (Against the $[25, 200]$ slope of the separate
$200$-restart grid: $0.69$ and $0.75$; the quoted ratio is within
one ensemble and one estimator.)
Brute restarts are inefficient in absolute terms: from $n = 10$
on, an e-fold of budget buys a few $10^{-3}$ of peakedness while
the level itself falls from $0.51$ at $n = 10$ to $0.12$ at $n = 16$, so
no restart budget holds a fixed fraction of the reach as $n$
grows.

\subsection{Optimizer robustness and the class boundary}
\label{sec:robustness}

On a fixed instance ($n = 10$), the entire
profile (best value within $2.8\%$ of the protocol's, all restarts
near-optimal, empty intermediate band, unique maximum) is
invariant under initialization scale $\times 10$ and learning rate
$\times 4$. Plain SGD, in the stable-local class as registered,
reaches $0.099$ against $0.463$ at the protocol's $400$ steps,
with $16/200$ restarts near-optimal and the control matrix's only
non-empty intermediate band (mass $0.275$):
registered
prediction 3 is falsified here; Appendix~\ref{app:registration}
records
the verdict. The mechanism is pacing, not the landscape: on twelve
paired restarts from identical initializations, the two optimizers
see the same median gradient norm along the trajectory ($3.9$ and
$4.0\times10^{-2}$) while their median per-step displacements
differ by a factor $47$ ($6.9\times10^{-2}$ against
$1.5\times10^{-3}$): raw-gradient-paced descent travels $0.7$ in
parameter space over the run where the protocol travels $38$. This
measurement is what the pacing criterion of the amended class
(Sec.~\ref{sec:protocol}) abstracts. An exploratory, unregistered
control gives SGD the matched $13{,}200$-evaluation budget at
$B = 16$ on the same instance: its reach more than doubles
($0.084 \to 0.200$ at the matched restart read) yet rides the cap
on all sixteen restarts and lands at $0.44$ of the matched Adam
cell ($0.4514$); budget explains part of the failure, pacing the
rest, and the failure stays downward at $33\times$ the budget.

Whether a distinct in-class optimizer reaches further than the
protocol's Adam is the upward question, the one the hardness
claims stand on, and it is answered by the preregistered
optimizer-class block (Appendix~\ref{app:methods}): optimizer
$\in$ \{Adam, L-BFGS-B\} crossed with initialization $\in$
\{normal, Haar\}, three instances per size at the matched
$13{,}200$-evaluation budget, the reach ratio
$\rho_{\rm arm}(n) = R_{\rm arm}(n)/R_{\rm conv}(n)$ read at
matched restart budget $B_0 = 16$, each outcome's cost fixed
before the data existed. Both Haar-initialized arms sit at or
below unity at every size, so the ceiling is not the normal
initialization's. Read as a two-by-two, the excess is an
optimizer-by-initialization interaction: at Haar initialization the
two optimizers coincide within errors at the largest size
(\texttt{analysis/optclass\_reach.log}). The (L-BFGS-B, $\sigma$)
arm is unity within
errors through $n = 14$ and exceeds it at the largest size:
$\rho(16) = 1.0389 \pm 0.0156$ on the paired construction,
per-instance ratios $1.069$, $1.031$, $1.017$
(Table~\ref{tab:optclass}; the denominator there is the converged
cell on instances $0$--$2$, $0.1388 \pm 0.0077$, not the
eighteen-instance $0.1312$ of Sec.~\ref{sec:steepening}). The
$n \le 14$ arms and the $n = 16$ cells ran on different
architectures; the dedicated control of
Appendix~\ref{app:registration} closes that confound. The
exceedance, stated exactly: one
departure above unity, at the largest size only, on three
instances, two degrees of freedom, where two standard errors give
$82\%$ two-sided confidence rather than $95\%$
($t_{0.975,2} = 4.30$). Because every arm reaches each instance's
atom at $n = 8$ to within $10^{-4}$
($\rho(8) = 1.00003 \pm 0.00002$ paired), the registered growth
statistic $\Gamma = \ln\rho(16) - \ln\rho(8)$ collapses to
$\ln\rho(16)$ up to that anchor: its significance is the $n = 16$ cell's, not
additional evidence, and the anchor-free weighted slope over
$n = 10$--$16$, $+0.0016 \pm 0.0016$, does not distinguish growth
from flat. The two registered constructions part company here for
the first time: the paired statistic triggers the growing-excess
branch, the literal one does not (Table~\ref{tab:optclass}), and
the tie-break fixed before these data existed selects the costlier
reading (Appendix~\ref{app:registration}). The registered outcome
is
therefore the third: the third falsifier of C-hardness is
triggered by our own measurement (Sec.~\ref{sec:hardness}). The
arm diagnostics are clean:
every restart of every arm peaks on $0^n$, and the L-BFGS-B arms
terminate by scipy convergence on all $48$ restarts per size at
$n \le 14$. The margin is a few percent with a shallow measured
trend: extrapolating the registered statistic's own rate, the
quasi-Newton excess reaches a factor $2$ near $n \approx 150$ and
a factor $\approx 1.2$ at the $n = 50$ of
Ref.~\cite{aaronson2024peaked}; the anchor-free slope
puts both further out. The excess reshapes the hardness
bookkeeping, not the reach law: L-BFGS-B's own reach still falls
from $0.72$ to $0.14$ over the grid at the same estimator, local
base $1.31$ on the last interval against Adam's $1.33$ on the same
three instances. At their $n = 50$ the margin
is worth less than one qubit of the reach law.

\begin{table}[H]
\caption{The optimizer-class block: reach ratio
$\rho_{\rm arm}(n) = R_{\rm arm}(n)/R_{\rm conv}(n)$ at matched
instances ($0$--$2$), matched evaluation budget ($13{,}200$), and
matched restart budget $B_0 = 16$, through the exact
without-replacement estimator. The $\sigma$-initialized arm carries
the protocol's normal initialization,
$\theta_0 \sim \mathcal{N}(0, \sigma^2 I_P)$ with $\sigma = 0.1$
(Sec.~\ref{sec:protocol}). Both readings of the registered statistic
are shown: \emph{paired} forms the ratio per instance and averages it,
with the standard error over the three instances; \emph{literal}
is the ratio of the two arm means, with relative errors combined
in quadrature. $\Gamma = \ln\rho(16) - \ln\rho(8)$ is the registered
growth statistic; $\rho(8) = 1$ to within $3 \times 10^{-5}$ for
every arm, so $\Gamma$ is $\ln\rho(16)$ up to that shift
(Sec.~\ref{sec:robustness}). The
denominator, the (Adam, normal)
converged cell, is $R_{\rm conv} = 0.7205 \pm 0.0176$,
$0.4806 \pm 0.0352$, $0.3679 \pm 0.0193$, $0.2473 \pm 0.0125$,
$0.1388 \pm 0.0077$ at $n = 8$--$16$. The paired construction of
the (L-BFGS-B, $\sigma$) arm exceeds unity at two standard errors
at $n = 16$ and triggers the registered branch; the literal
construction does not, and the tie-break fixed in the analysis
code before these data existed selects the costlier reading
(Appendix~\ref{app:registration}).}
\label{tab:optclass}
\centering
\begin{tabular}{@{}llccc@{}}
\toprule
 & & L-BFGS-B, $\sigma$ init & L-BFGS-B, Haar init & Adam, Haar init \\
\midrule
$\rho(8)$  & paired  & $1.0000 \pm 0.0000$ & $1.0000 \pm 0.0000$ & $1.0000 \pm 0.0000$ \\
           & literal & $1.0000 \pm 0.0345$ & $1.0000 \pm 0.0345$ & $1.0000 \pm 0.0345$ \\
\addlinespace
$\rho(10)$ & paired  & $1.0056 \pm 0.0036$ & $0.9929 \pm 0.0091$ & $0.9805 \pm 0.0125$ \\
           & literal & $1.0056 \pm 0.1043$ & $0.9938 \pm 0.1079$ & $0.9811 \pm 0.1057$ \\
\addlinespace
$\rho(12)$ & paired  & $0.9988 \pm 0.0058$ & $0.9928 \pm 0.0065$ & $0.9643 \pm 0.0088$ \\
           & literal & $0.9985 \pm 0.0724$ & $0.9935 \pm 0.0781$ & $0.9652 \pm 0.0781$ \\
\addlinespace
$\rho(14)$ & paired  & $1.0078 \pm 0.0067$ & $0.9866 \pm 0.0166$ & $0.9822 \pm 0.0154$ \\
           & literal & $1.0082 \pm 0.0756$ & $0.9882 \pm 0.0820$ & $0.9836 \pm 0.0804$ \\
\addlinespace
$\rho(16)$ & paired  & $1.0389 \pm 0.0156$ & $0.9257 \pm 0.0507$ & $0.9265 \pm 0.0563$ \\
           & literal & $1.0395 \pm 0.0871$ & $0.9310 \pm 0.1110$ & $0.9319 \pm 0.1142$ \\
\midrule
$\Gamma$   & paired  & $+0.0381 \pm 0.0151$ & $-0.0772 \pm 0.0548$ & $-0.0764 \pm 0.0608$ \\
           & literal & $+0.0387 \pm 0.0906$ & $-0.0715 \pm 0.1241$ & $-0.0706 \pm 0.1273$ \\
\bottomrule
\end{tabular}
\end{table}
 \FloatBarrier
 \section{The barren plateau does not explain the reach}
\label{sec:bp}

The reach of Sec.~\ref{sec:dynamics} shrinks with $n$ for every
optimizer measured. The explanation on record is the barren
plateau \cite{mcclean2018barren}.

A static route to a head start would be a gradient
signal at initialization stronger than the value signal. Evaluated
at the near-identity initialization, the exact gradient covariance
of Sec.~\ref{sec:kernel} closes that route: the gradient-variance
enhancement over the Haar limit at the operating depth is
$1.12$--$1.16$, flat through $n = 12$ ($1.14$; four sizes),
numerically close to the square root of the value-variance
enhancement at $n = 8$ ($\sqrt{1.34} = 1.16$, from the exact
$m_2 = 1.164$). Mean, variance, and gradient statistics at
initialization are Haar-like to $O(1)$: any advantage the flow
exploits is acquired during optimization, not present at the
start. The initialization signal thus sits in the barren-plateau
regime, its variance at the $\Theta(2^{-2n})$ Haar scale predicted
by Eq.~(7) of Ref.~\cite{mcclean2018barren}, up to the
$1.12$--$1.16$ factor.

What the plateau implies for
the maximum depends on a model of the field.
Gaussianity is a modeling hypothesis;
Eq.~\eqref{eq:hypotheses} supplies the second-order amplitude data,
not the Gaussian law. The comparison of Lemma~\ref{lem:gaussian}
yields the structural
statement of the statics: every separable Gaussian model
consistent with those data caps expected peakedness at $\mathrm{poly}(n)\,2^{-n}$,
while the reachable peakedness of Sec.~\ref{sec:dynamics} decays,
over the measured grid, far more slowly than $2^{-n}$.

A Gaussian model in the amplitude carrying the data
of Eq.~\eqref{eq:hypotheses} determines the entire law of the field:
$\delta$ is exponential with mean $1/D$ at every point, so $m_k = 1$
for every $k$, and the covariance of the objective is forced to the
Gaussian form $\Ee[\delta\delta'] = (1 + F)/D^2$, a function of $F$
alone. Both consequences fail against exact computation at finite
depth. The kernel of Eq.~\eqref{eq:kernel} is not a function of
$(F, n)$ alone (Sec.~\ref{sec:kernel}); and $m_2 \neq 1$ at
$(n, \tau_r) = (8, 8)$ at every probe point computed, by $16.4\%$
at the initialization and $3.6\%$ at the reachable solutions
(Sec.~\ref{sec:probedep}). One such point suffices, so the
conclusion does not depend on which: the model fails at every
depth computed, and the failure changes order, $O(1/D)$ in the deep limit,
where the Haar floor gives $m_2 = D/(D{+}1)$, and $O(1)$ at the
operating depth. The Gaussian model consistent with the exact
amplitude data is thus falsified at finite $n$, with no asymptotic
step and no undetermined constant. Lemma~\ref{lem:gaussian} adds
that this falsification is \emph{necessary} for the reach law's
scaling form. The statement is one of order; the lemma's constant is
not explicit.

A depth decomposition localizes the excess of reachable
peakedness. Anchoring each
instance at $\tau_r = 4n$, where the exact $m_2$ sits at its Haar
floor ($0.985$ against $D/(D{+}1) = 0.9846$ at $n = 6$), gives a
deep floor $\delta_{\rm deep} = 0.275(8), 0.118(2)$ at
$n = 10, 12$ (means over the three deep-anchored instances),
instance-stable to $\sim 2\%$, compatible with
$\mathrm{poly}(n)\,2^{-n}$, and atom-free (top-$3$ spreads of
$10^{-3}$). At $n \ge 10$, under-reach thus extends into the
Porter--Thomas regime itself. The finite-depth excess, the reach
at $\tau_r = n$ over the reach at $\tau_r = 4n$, grows from
$\approx 1.8$ to $\approx 3.1$ (means over three instances)
between $n = 10$ and $12$: on the anchored sizes, the
exponential excess of reachable peakedness over the deep
floor is carried by the finite-depth factor, the quantity a theory
of this landscape must explain.

Two depths confirm independently that the amplitude data do not
suffice. Those data are exactly depth-independent
(Eq.~\eqref{eq:hypotheses} holds at every $\tau_r \ge 1$), so every
separable Gaussian model consistent with them carries the same
ceiling at all depths. The measured reach does not: anchoring the
same instances at
$\tau_r = 4n$ lowers it, by the factors measured above. No
depth-independent second-order amplitude description generates a
depth-dependent reach. What the optimizer exploits is therefore
invisible to Eq.~\eqref{eq:hypotheses}; it shows in the moments of
$\delta$, which Fig.~\ref{fig:moments} measures at the
initialization probe, Sec.~\ref{sec:probedep} bounding how far that
measurement travels.
 \section{Geometry of the solution set}
\label{sec:solgeometry}

The solutions the protocol reaches carry an exact local structure,
the gauge of the parametrization and of the objective.
String-method scans measure the global structure of that set.

\subsection{The gauge, counted}
\label{sec:gaugecount}

Counting the internal wire segments of $V$ gives the dimension of
the continuous gauge of Sec.~\ref{sec:gauge} exactly.

\begin{proposition}[Gauge counting]
\label{prop:gauge}
Let $S$ be the number of internal wire segments of $V$; for the
convention of Sec.~\ref{sec:field}, $n$ even and the first peaking
layer covering all $n$ qubits,
$S = (n{-}2)(\tau_p{-}1) + 2\lfloor(\tau_p{-}1)/2\rfloor$. At
every $\theta$ at which the word map is regular at each gate and
the generic-position condition of Appendix~\ref{app:proofs} holds,
the fiber $\{\theta' : V(\theta') = V(\theta)\}$ contains a smooth
$3S$-dimensional family through $\theta$, and the corank of the
projective Jacobian of the output state is at least
$\max\bigl(3S,\, P - (2^{n+1}{-}2)\bigr)$. At the operating
configuration $(n, \tau_p) = (8, 4)$: $3S = 60$.
\end{proposition}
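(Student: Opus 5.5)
The plan is to prove the statement in three parts: count the segments, build the gauge family and show it has full dimension $3S$, then read off the corank bound.

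\emph{The count.} I will count internal wire segments qubit by qubit: a qubit touched by $g_q$ gates of $V$ carries exactly $g_q - 1$ internal segments. Under the convention of Sec.~\ref{sec:field}, even layers pair $(0,1),(2,3),\dots,(n{-}2,n{-}1)$ and cover every qubit. Odd layers pair $(1,2),\dots,(n{-}3,n{-}2)$ and leave qubits $0$ and $n{-}1$ idle. Each of the $n-2$ interior qubits is therefore touched once per layer, giving $g_q = \tau_p$. The two boundary qubits are touched only in the even layers $0,2,\dots$, giving $g_q = \lceil \tau_p/2\rceil$. Summing $g_q - 1$ over qubits yields $(n{-}2)(\tau_p{-}1) + 2(\lceil\tau_p/2\rceil - 1)$, and $\lceil\tau_p/2\rceil - 1 = \lfloor(\tau_p{-}1)/2\rfloor$. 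At $(8,4)$ this gives $S = 6\cdot 3 + 2 = 20$, so $3S = 60$.

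\emph{The fiber family.} Index the segments by $s = 1,\dots,S$ and define the map $\Phi : SU(2)^S \to SU(4)^{\#\mathrm{gates}}$. It inserts $u_s u_s^\dagger$ on segment $s$, absorbs $u_s$ into the earlier endpoint gate ($G \mapsto (u_s\otimes I)G$ on the relevant wire), and absorbs $u_s^\dagger$ into the later one. As in Sec.~\ref{sec:gauge}, the product $V$ is unchanged for every $(u_s)$, because $u_s$ acts on a wire that no intermediate gate touches.

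The word map $\mathbb{R}^{15}\to SU(4)$ is between manifolds of equal dimension $15$. At a regular point it is therefore a local diffeomorphism, and the inverse function theorem lifts $\Phi$ near $(u_s)=\mathrm{id}$ to a smooth map $\tilde\Phi$ into angle space with $\tilde\Phi(\mathrm{id}) = \theta$ and $V\circ\tilde\Phi \equiv V(\theta)$.

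It remains to show that $\tilde\Phi$ is an immersion at the identity, i.e.\ that $d\Phi$ is injective on $\mathfrak{su}(2)^S$. This is where I expect the real work. The differential sends $(X_s)$ to the gate tangents $i(X_{s}^{\rm out}\otimes I)G - iG(X_{s'}^{\rm in}\otimes I)$, summed over the segments meeting each gate.

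The approach is to order the segments along the brickwall and argue triangularly. A kernel element must cancel at the last gate of each wire, which forces the corresponding $X_s=0$; one then propagates this backward. The cancellation can fail only if a gate maps a local $\mathfrak{su}(2)$ direction on one side to a local direction on the other, i.e.\ if $G^\dagger(X\otimes I)G$ lies in $\mathfrak{su}(2)\otimes I + I\otimes\mathfrak{su}(2)$. Excluding exactly this coincidence is what I take the generic-position condition of Appendix~\ref{app:proofs} to assert. Given it, $d\Phi$ has rank $3S$ and the image of $\tilde\Phi$ is a smooth $3S$-dimensional family inside the fiber.

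\emph{The corank bound.} The tangent space of this family lies in the kernel of $d\,V(\theta)$, hence in the kernel of the Jacobian of the output state $V(\theta)U_r\ket{0^n}$ and of its projectivization. This gives corank $\ge 3S$. Separately, the projective Jacobian maps into the tangent space of $\mathbb{CP}^{2^n-1}$, whose real dimension is $2^{n+1}-2$. Its rank is therefore at most $2^{n+1}-2$, so the corank is at least $P-(2^{n+1}-2)$. Taking the maximum of the two lower bounds gives the stated estimate.
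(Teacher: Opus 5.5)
Your proof is correct. The segment count, the inverse-function-theorem lift, and the corank bound coincide with the paper's; the difference lies in how you show that $d\Phi$ is injective.

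The paper uses no ordering of the gates. It writes the kernel condition at each gate as $\sum_{\mathrm{out}(g)}E(X_s)=\mathrm{Ad}(G_g)\bigl(\sum_{\mathrm{in}(g)}E(X_s)\bigr)$. This equates an element of $\mathfrak{l}=\mathfrak{su}(2)\oplus\mathfrak{su}(2)$ with an element of $\mathrm{Ad}(G_g)\mathfrak{l}$. The paper then defines the generic-position condition as $\mathfrak{l}\cap\mathrm{Ad}(G_g)\mathfrak{l}=0$ at every gate, so both sides vanish in one step and every entering $X_s$ is zero. The single-wire condition you guessed is implied by this one, so your argument goes through under the stated hypotheses.

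Your triangular route, however, gives more than you use. Resolve the gates in reverse layer order. Then every outgoing segment of a gate has already been shown to vanish when you reach it, and the gate equation collapses to $G_g\sum_{\mathrm{in}(g)}E(X_s)=0$. Invertibility of $G_g$, together with the entering segments sitting on distinct wires, forces them to zero, with no transversality needed.

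The coincidence you set out to exclude arises only when a gate is resolved while one of its outgoing slots is still undetermined. An example is the last gate on an edge wire for even $\tau_p$, whose other wire continues into the final layer. Organized by layers, your approach proves injectivity at every word-regular $\theta$, local gates included. The paper's one-line transversality argument instead rests on a hypothesis that fails on local gates and has to be argued generic. The paper itself switches to this layer recursion, run forward from the first layer, for Proposition~\ref{prop:probegauge}, and notes there that it replaces the transversality argument.
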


The bound is saturated: the measured corank equals it exactly at
the archived solutions of three sizes, $60$, $108$ and $162$ at
$(n, \tau_p) = (8, 4), (10, 5), (12, 6)$, and at Haar-random points
of twelve configurations spanning $(4, 2)$ to $(16, 8)$, depth scans
included, the excess over $3S$ appearing only as the trivial
state-space cap at $(4, 2)$; the same computation exhibits the
family constructively, re-solving the dressed word at fixed $V$ to
machine precision (\texttt{analysis/gauge\_dimension.log}). The
parametrization therefore carries exactly $3S$ continuous directions
fixing the output state $V U_r \ket{0^n}$.

The objective sees less of $\theta$ than the output state does. By
Eq.~\eqref{eq:field}, $\delta$ depends on $\theta$ only through the
ray of the probe $\ket{w(\theta)}$, so every direction fixing that
ray is exactly flat whatever it does to the output state: the gauge
of the objective is strictly larger than that of $V$.

\begin{proposition}[Boundary gauge of the objective]
\label{prop:probegauge}
Call a qubit \emph{closed} at a gate of $V$ if no later gate of
$V$ acts on it, and let $B_2$ and $B_1$ count the gates of $V$ with
two and with one closed qubit. For the convention of
Sec.~\ref{sec:field} with the first peaking layer covering all $n$
qubits: $B_2 = n/2$, $B_1 = 0$ for $\tau_p$ odd and
$B_2 = n/2 - 1$, $B_1 = 2$ for $\tau_p$ even. At every $\theta$ at
which the word map is regular at each gate, the corank of the
projective Jacobian of the probe ray
$\theta \mapsto [\,\ket{w(\theta)}\,]$ is at least
$3S + 9B_2 + 4B_1$. At the operating configuration
$(n, \tau_p) = (8, 4)$: $60 + 27 + 8 = 95$.
\end{proposition}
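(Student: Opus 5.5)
The plan is to exhibit, at every regular $\theta$, explicit smooth families of tangent directions in $\theta$-space that leave the probe ray $[\ket{w(\theta)}]$ fixed. We then show these directions are linearly independent, so their count bounds the corank of the projective Jacobian from below.

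Write $\ket{w(\theta)} = V(\theta)^\dagger\ket{0^n}$. Because this reads $V^\dagger$ acting on $\ket{0^n}$, the gates that touch $\ket{0^n}$ first in $V^\dagger$ are the \emph{last} gates of $V$: the gates whose qubits are closed. There are three sources of flat directions.

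\emph{(i) The $V$-gauge.} Proposition~\ref{prop:gauge} gives the $3S$ internal-wire directions. These fix $V(\theta)$ itself, hence fix $\ket{w}$, and regularity of the word map lifts them to $\theta$.

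\emph{(ii) Gates with two closed qubits.} Let $G$ be a gate with both qubits closed. In $V^\dagger$, $G^\dagger$ acts directly on $\ket{00}$ on its pair. Replacing $G^\dagger$ by $G^\dagger h$ with $h$ in the stabilizer of the ray $[\ket{00}]$ in $SU(4)$ leaves $\ket{w}$ fixed up to phase. That stabilizer is $S(U(1)\times U(3))$, of dimension $1+9-1=9$. Lifting through the word map at a regular point gives $9$ directions per such gate.

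\emph{(iii) Gates with one closed qubit.} Let $G$ be a gate with exactly one closed qubit $q$. The prior state on $q$ is $\ket{0}$, so inserting $u\in U(1)\subset SU(2)$ diagonal on $q$ just before $G^\dagger$ fixes the ray. The other qubit is not closed, so its input is not fixed. This seems to give only one direction, but the statement claims $4$ per gate.

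The count of $4$ has to come from a further absorption. A diagonal phase on the closed qubit combines with the internal-segment freedom on the other wire, which is already in $S$. More plausibly, what is meant is the stabilizer of $\ket{0}\otimes\mathbb{C}^2$ inside $SU(4)$ acting on the closed qubit: that is $S(U(2)\times U(2))$ restricted to fixing $\ket{0}$ on $q$ while acting freely on the open wire. Modulo the $SU(2)$ on the open wire that is already counted as wire gauge, this leaves $4$ new dimensions. I would verify this precisely by a Lie-algebra computation: take the subalgebra of $\mathfrak{su}(4)$ preserving the subspace $\ket{0}_q\otimes\mathbb{C}^2$ up to the overall ray, and quotient by the $\mathfrak{su}(2)$ acting on the non-closed wire that is shared with segment gauge.

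The counts $B_2$ and $B_1$ come from the brickwall parity. For odd $\tau_p$ the last layer covers all $n$ qubits, giving $n/2$ gates with both qubits closed. For even $\tau_p$ the last layer covers qubits $2$ to $n-1$, giving $n/2-1$ gates with both qubits closed, while the two boundary qubits close at gates of the previous layer whose other qubit stays open, giving $B_1=2$.

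\textbf{Independence and main obstacle.} The main obstacle is showing that the directions from (i), (ii) and (iii) are linearly independent and do not overlap. The segment gauge at a wire whose later endpoint is a closed gate could coincide with part of that gate's stabilizer. To handle this, I would work in the Lie algebra at each gate. Decompose each dressing as a block tangent vector $(X_G)_G \in \bigoplus_G \mathfrak{su}(4)$. The segment gauge contributes $X_G = \pm\, \mathrm{id}\otimes x$ terms on shared wires, and the boundary stabilizers contribute elements of $\mathrm{stab}(\ket{00})$ or the one-closed analogue. Independence then reduces to a triangular argument on the brick graph, peeling gates from the last layer inward, provided the stabilizer contributions are chosen in a complement of the single-wire $\mathfrak{su}(2)$ parts already used by (i). Regularity of the word map at each gate makes the map $\theta\mapsto (X_G)$ a local submersion onto the gates, so the dimension of this Lie-algebra family transfers to a lower bound on the corank in $\theta$.

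At $(8,4)$ the bound evaluates to $60+9\cdot3+4\cdot2=95$.
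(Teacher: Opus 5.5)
Your skeleton is the paper's: the $3S$ wire gauge plus local kernels at the closing gates, lifted by word regularity, then a layer recursion. Your parity count of $B_2, B_1$ and the nine-dimensional kernel $\mathfrak{s}(\mathfrak{u}(1)\oplus\mathfrak{u}(3))$ at two-closed gates are right. The gaps are in (iii) and in the independence step.

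At a gate with one closed qubit $q$ you find only the $U(1)$ phase. You then reach $4$ as $\dim S(U(2)\times U(2)) - 3$, treating the open-wire $\mathfrak{su}(2)$ as ``already counted''. That quotient is not a space of flat directions. A rotation of the open wire at this single gate does not fix the probe ray in general, since its input is not fixed, as you observe. A segment direction moves two gates, so nothing at this gate is already counted. The missing idea is that $q$ only ever meets $\bra{0}$, so the gate enters $\bra{w} = \bra{0^n}V$ only through the block $(\bra{0}\otimes\mathbb{1})u$. In the frame $\delta u = Xu$, the flat directions are exactly those with $(\bra{0}\otimes\mathbb{1})X \propto \bra{0}\otimes\mathbb{1}$. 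These are scalar on the $\ket{0}_q$ block, zero on the off-diagonal blocks, and any anti-Hermitian $2\times 2$ block on $\ket{1}_q$ with the trace tied. That is a relative phase plus $\ket{1}\bra{1}_q\otimes\mathfrak{su}(2)$, four dimensions, invisible because $q$ sits in $\ket{0}$. This kernel meets $\{\mathbb{1}\otimes x\}$ trivially, since $x$ would have to be scalar and traceless. Your $7-3$ agrees only because $\mathfrak{s}(\mathfrak{u}(2)\oplus\mathfrak{u}(2))$ splits as this kernel plus $\{\mathbb{1}\otimes x\}$. The arbitrary complement your proviso allows need not be flat, e.g.\ $\ket{0}\bra{0}\otimes A + \ket{1}\bra{1}\otimes c\,\mathbb{1}$ with $A$ non-scalar.

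The independence step fails in the direction you propose, peeling from the last layer inward. At a gate $g$ with two closed qubits, the entering segment dressings appear in the left frame as $\mathrm{Ad}(G_g)\mathfrak{l}$, with $\mathfrak{l} = \mathfrak{su}(2)\oplus\mathfrak{su}(2)$. This \emph{always} meets the nine-dimensional kernel, because the ray $G_g^\dagger\ket{00}$ is annihilated by a nonzero local generator, namely $i(Z\otimes\mathbb{1} - \mathbb{1}\otimes Z)$ in its Schmidt basis. So the overlap you worry about is present at every point, and the relation at a closing gate cannot force that gate's variables to zero. Choosing complements there would leave fewer than $9$ directions per gate. The overlap is only local, because each segment direction also moves its earlier endpoint.

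The paper inducts \emph{upward} from the first layer, where no segment enters. Once the lower layers are settled, each gate relation reduces to $\sum_{\mathrm{out}(g)}E(X_s) - Z_g = 0$. At a non-closing gate the outgoing insertions sit on distinct wires and vanish separately. At a one-closed gate, $\mathbb{1}\otimes X_s = Z_g$ with $Z_g$ in the four-dimensional kernel forces $X_s = 0$. A two-closed gate has no outgoing segment, so $Z_g = 0$. The same recursion proves the $3S$ family injective without the generic-position condition. Your citation of Proposition~\ref{prop:gauge} would import that condition, although the statement assumes only word regularity.
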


This bound is saturated as well. The measured probe-ray corank
equals $3S + 9B_2 + 4B_1$ at Haar-random points of the same twelve
configurations (the probe involves no instance, so the count holds
instance-free) and at the archived solutions of three sizes:
$95$, $153$ and $215$ at $(8, 4), (10, 5), (12, 6)$
(\texttt{analysis/probe\_gauge.log}). The
proposition bounds the rank everywhere, so equality
at one point forces constant rank on a neighborhood, and the
constant-rank theorem makes the fibers of
$\theta \mapsto [\,\ket{w(\theta)}\,]$ there smooth
$(3S + 9B_2 + 4B_1)$-dimensional submanifolds along which $\delta$ is
\emph{exactly} constant, to all orders
(Appendix~\ref{app:proofs}); Sec.~\ref{sec:geometry} reads the
Hessian at the solutions against this count. At $\tau_p = 1$ the
proposition recovers the brick-product geometry of
Theorem~\ref{thm:firstrung}: the probe ray ranges over a manifold
of dimension $3n$, measured as rank $24$ at $n = 8$.

The plainest piece of this boundary gauge acts on the objective but
not on the state: dressing the closing gates of $V$ by
$\bigotimes_j R_{z,j}(\alpha_j)$ sends the probe to
$e^{i \sum_j \alpha_j / 2} \ket{w}$, a phase, so $\delta$ is exactly
invariant while $V U_r \ket{0^n}$ changes. Every solution thus
carries a continuous $n$-parameter family of equal-$\delta$
solutions whose floor-normalized mutual overlaps $\hat q$ sit near
zero. All $n$ directions move
the state: the generators $Z_j$ are traceless, so no combination of
the dressing acts as an overall phase on the ray, and the rank of
their projected state motions is measured as $n$ at the solutions of
three sizes, smallest singular value $0.43$ at $n = 8$
(\texttt{analysis/probe\_gauge.log}). The solution set at fixed
$\delta$ therefore contains $n$ state-moving exactly flat directions
by symmetry alone (the floor of the $9B_2 + 4B_1$ boundary count, not
its total), so a pair with $\hat q \approx 0$ may be genuinely unrelated
or related by the boundary dressing alone.

\subsection{Geometry at the atom: the objective's gauge, in
closed form}
\label{sec:geometry}

Hessians at the atom solutions of the operating configuration
$(n, \tau_p) = (8, 4)$, $P = 210$, show a flat band of exactly $95$
eigenvalues at the numerical floor. The solutions are polished from
the protocol's stop to stationarity
($\|\nabla\delta\| \approx 3 \times 10^{-8}$), a diagnostic step
outside the frozen protocol, on which no reported reach level rests.
At every atom of the decorrelated set the count is invariant from
threshold $10^{-7}$ to $10^{-3}$: the largest band eigenvalue stays
below $1.4 \times 10^{-8}$, and the smallest of the remaining $115$
exceeds $1.3 \times 10^{-3}$, five orders of magnitude across the gap.
Ninety-five is the closed-form count $3S + 9B_2 + 4B_1$ of
Proposition~\ref{prop:probegauge}, the measured probe-ray corank at
every solution, and the identification holds direction by direction:
along each of the $95$ directions the probe ray is stationary to
first order,
speed $\le 3 \times 10^{-7}$ across the atom set, while along each of
the other $115$ it moves five orders of magnitude faster, $\ge 4 \times
10^{-2}$. The identity repeats at the polished best restarts of $(10, 5)$ and
$(12, 6)$: bands of exactly $153$ and $215$ against the same count
(\texttt{analysis/probe\_gauge.log}).

At every solution computed, the band is the objective's gauge and
nothing else. Its
decomposition is structural, not fitted: $3S = 60$ directions fix $V$
and with it the output state (Proposition~\ref{prop:gauge}); the
remaining $9B_2 + 4B_1 = 35$ move the output state while fixing the probe
ray, tracing an exact equal-$\delta$ orbit through every solution
($35$, $45$ and $53$ state-moving flat directions at the three sizes,
measured as ranks of the projected state motion rather than read off
as differences, of which the boundary $R_z$ dressing of
Sec.~\ref{sec:gaugecount} supplies $n$). The flatness of the counted
directions is exact to all orders (Appendix~\ref{app:proofs}); that
the band holds nothing else is measured, not derived. Nothing in
the flat band constrains $\delta^*$: what fixes the ceiling is the
transverse spectrum of the $115$ non-flat directions, which this paper
measures at the solutions but does not model.

The count also explains the band at the protocol's own stop, governed
by learning-rate decay at $\|\nabla\delta\| \approx 10^{-3}$. There the
gauge directions are not exactly flat: in adapted coordinates their
Hessian entries are first order in the gradient. The band's
residual scale measures a constant fraction $\approx 4 \times 10^{-2}$
of $\|\nabla\delta\|$ across the four atoms
(\texttt{analysis/hessian\_solutions.log}, whose corank column is the
output-state count $3S = 60$ of Proposition~\ref{prop:gauge}). The
count's threshold dependence at that stop is this stall artifact, which
polishing removes without moving $\delta$ at the atom ($0.752655$ at
instance $0$, unchanged at six decimals across the four maximizers).

This sharpens the local reading of the solution \emph{shelf} that
Sec.~\ref{sec:shelf} probes globally: through
every solution runs an exact equal-$\delta$ orbit of dimension
$9B_2 + 4B_1 = \Theta(n)$ that moves the state (room enough for
decorrelated maximizers and for drift without loss). Whether the
reachable set exceeds this orbit is a question about paths, not
Hessians, answered at the trap scale in Sec.~\ref{sec:shelf}.

\subsection{The reachable ensemble: a single corrugating shelf}
\label{sec:shelf}

Reachable solutions are pairwise decorrelated at every size probed
(Fig.~\ref{fig:pq}) and insensitive to the initialization scale
($\sigma = 0.1 \to 1.0$ control). Recorded
trajectories at $n = 12$ show the objective converging long before
the state commits: the state's commitment time (first crossing of
overlap $0.5$ with its own endpoint) comes a factor $1.9$ after
the value reaches $90\%$ of its final level and a factor $8.4$
after the $50\%$ crossing (medians over $8$ recorded restarts). At
$n = 8$ the same instrument, on $12$ restarts, gives the opposite
ordering: commitment arrives at a factor $0.86$ of the $90\%$
crossing, before it. The separation is thus a property of the
larger size, where the shelf is wide and the optimizer drifts
along it, not a mechanism visible at $n = 8$, where the atom is
reached directly. Two sizes and small samples make this suggestive
only: drift on a nearly flat set rather than convergence into the
distinct traps of Ref.~\cite{anschuetz2022traps}; the connectivity
measurement below, not this one, carries the conclusion.

\begin{figure}[H]
\centering
\includegraphics[width=\textwidth]{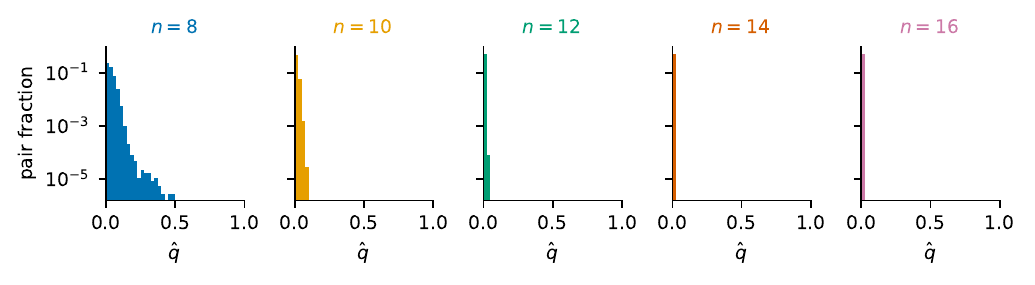}
\caption{Floor-normalized pair-overlap distributions $p(\hat q)$,
pooled over all instances (log scale; $\varepsilon = 0.2$ filter).
At every size probed the distribution is a spike at $\hat q = 0$
with a rapidly decaying tail: the intermediate band
$\hat q \in [0.25, 0.75]$ carries $32$ of $3.6\times10^{5}$ pairs
at $n = 8$ ($9\times10^{-5}$) and zero pairs at every
$n \ge 10$, including zero of $2.5\times10^{5}$ pairs at $n = 16$
over all eighteen catch-up instances.}
\label{fig:pq}
\end{figure}
 
Whether that set is shattered or connected is settled by a string
method \cite{e2002string,e2007simplified}, in the simplified
equal-arc-length form of Ref.~\cite{e2007simplified}: a
fixed-endpoint polyline of waypoints between two decorrelated
near-optimal solutions, alternately ascended on $\delta$ and
reparametrized to equal arc length, its minimum read on a dense
grid and accepted only under a path-validity criterion
(Appendix~\ref{app:controls}). Two
baselines calibrate it: raw segments (unoptimized straight
interpolations) and scrambled controls (the same protocol run
toward a near-optimal solution of a different instance, which must
fail for the instrument to count). At every size probed the
re-optimized paths stay $10^2$--$10^3$ times above the trap scale
$2^{-n}$ (median $1.3\times10^2$ at $n = 8$ rising to
$1.6\times10^3$ at $n = 16$) while raw segments and controls
collapse to it or below: no barrier between reachable solutions
descends to the trap scale, so the reachable ensemble does not
fragment into disconnected components (an ``archipelago'') at that
scale. The stricter criterion fixed in the analysis code, a path
minimum at or above $0.8$ of the endpoint level matching the
$\varepsilon = 0.2$ near-optimality filter, is met by the per-size
statistic at no size and by single paths only at $n = 8$.
The measurement establishes, and we use below, connectivity at the
trap scale, not at the near-optimal level. What changes with $n$
is the depth of the shelf's corrugation, quantified by the
corrugation floor $\varrho(n)$, the median string-path minimum
relative to the endpoint level; smaller $\varrho$ means deeper
corrugation (Fig.~\ref{fig:corrugation}).

A first pass measured $\varrho$ at four sizes, one instance each,
at two string resolutions; re-running the one size where
resolution was tested lowered the median $\varrho$ by $15.6\%$, a
shift of the same order as the effect, so that design could not
carry a statement about functional form. We therefore repeat the
measurement on $180$ string-method paths at one matched resolution
of $64$ segments across $n = 8$--$16$, three instances per size
and twelve disjoint pairs each, under an analysis registered
before those runs existed. The corrugation floor, a mean over the
three instance medians with its standard error over instances, is
\begin{equation*}
\varrho(n) = 0.726 \pm 0.025,\; 0.560 \pm 0.005,\; 0.444 \pm 0.014,\;
0.310 \pm 0.027,\; 0.226 \pm 0.017
\end{equation*}
at $n = 8, 10, 12, 14, 16$. Weighted by those errors, an
exponential $\varrho \propto 0.871^n$ fits with $\chi^2 = 3.4$ on
three degrees of freedom ($p = 0.34$), while the best fixed power
law, $\varrho \propto n^{-1.50}$, reads $\chi^2 = 14.4$:
asymptotically $p = 0.002$, but the weights are estimated on three
instances, and the same statistic calibrated by simulation at that
sample size gives $p \approx 0.1$, $0.06$ under pooled weights
(\texttt{analysis/corrugation\_calibration.log}); the interval
exponents $\ln(\varrho_i/\varrho_{i+1})/\ln(n_{i+1}/n_i)$ rise
$1.16, 1.28, 2.32, 2.38$. The first condition of the registered
verdict\footnote{Registered statistic: the median of $\varrho$ over
the twelve pairs per (size, instance), then the mean of the three
instance medians, with the standard error over instances as the error
bar of record. Per-instance medians: $0.713$, $0.690$, $0.774$ at
$n = 8$; $0.571$, $0.553$, $0.557$ at $10$; $0.472$, $0.433$,
$0.426$ at $12$; $0.309$, $0.357$, $0.264$ at $14$; $0.237$,
$0.193$, $0.248$ at $16$.
Registered test: weighted least squares of $\ln\varrho$
against $n$ and against $\ln n$, compared by weighted $\chi^2$ on
three degrees of freedom, with the interval-exponent test for a single
fixed exponent. Registered verdict: if the exponential attains the
lower $\chi^2$ \emph{and} a fixed exponent is excluded at $p < 0.05$,
the claim that the corrugation deepens faster than any fixed power
stands; otherwise it is withdrawn from the abstract and replaced by
``the corrugation deepens with $n$; these data do not determine the
functional form,'' and Conjecture~\ref{conj:shelf} drops its rate
clause.} is met and the second does not survive its calibration, a
fixed exponent not being excluded at $p < 0.05$; the registered
fallback therefore applies, the costlier reading: the corrugation
deepens with $n$, and these data do not determine the functional
form. The errors carrying this reading are instance-level, not
within-instance bootstrap. The abstract, condensed after this
verdict, carries neither the original claim nor the replacement
sentence; the fallback reading stands here and in
Conjecture~\ref{conj:shelf}.

\begin{figure}[H]
\centering
\includegraphics[width=0.47\textwidth]{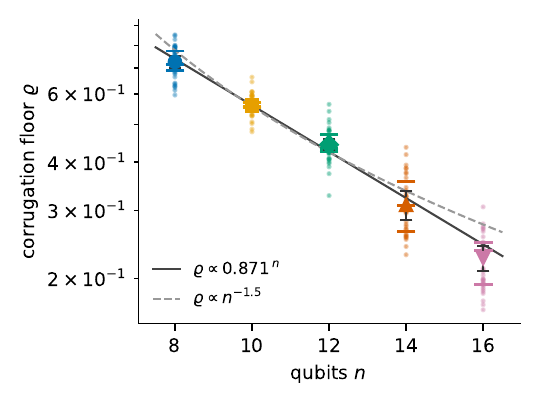}
\caption{Corrugation floor of the connected shelf, from the registered
sweep: $180$ string-method paths,
three instances per size, twelve disjoint pairs each, all at one
resolution of $64$ segments. Small dots are per-pair minima relative
to the endpoint level, dashes are the three instance medians, large
markers their mean, error bars the standard error over instances. The
exponential $\varrho \propto 0.871^n$ (solid) fits those errors with
$\chi^2 = 3.4$ on three degrees of freedom; the best fixed power law
(dashed) reads $\chi^2 = 14.4$, a comparison whose calibration
Sec.~\ref{sec:shelf} discusses. Controls
and raw segments (not shown) collapse to the trap scale
$2^{-n}$ at every size probed.}
\label{fig:corrugation}
\end{figure}
 
Neither of the following belongs to the registered verdict. The
exponential is adequate, not merely the better of two models
($\chi^2 = 3.4$, $p = 0.34$), with one measured threat to that
adequacy: the consolidated sweep is least resolved at the top of
the grid, adjacent-waypoint overlap falling from
$0.894$ to $0.667$ over $n = 8$--$16$
(Appendix~\ref{app:controls}), so the top-of-grid points carry the
largest resolution uncertainty. The base is stable against the change of design: the sweep
moved individual points by $2$ to $7\%$ while leaving the fitted
base at $0.871$, the newly measured $n = 14$ falling on the curve.
 \clearpage
\section{Shrinking reach without a trap-scale overlap gap}
\label{sec:hardness}

At every level the fixed protocol reaches, direct measurement refutes
the trap-scale overlap-gap picture of Ref.~\cite{gamarnik2021overlap}
for this landscape: feasible paths connect decorrelated
near-optimal solutions at all five sizes (Sec.~\ref{sec:shelf}),
and the corrugation along them, while deepening, never approaches the
trap scale $2^{-n}$. The pairwise decorrelation of Fig.~\ref{fig:pq}
does not carry this conclusion alone: it bounds the number of basins
from below, but a sparse sample of many mutually decorrelated
clusters would empty the intermediate band just as effectively.
Connectivity is what excludes fragmentation, and a path search is
one-sided, so what it excludes is fragmentation at the trap scale,
not clustering at the near-optimal level. Hardness, if structural
here, cannot be clustering hardness at the trap scale.

We conjectured the obstruction to be a triple, stated for the
brickwall ensemble at the operating regime
$(\tau_r, \tau_p) = (n, n/2)$. Two members stand; the third was
withdrawn by its own registered falsifier and is kept below as
stated, so that what was withdrawn is inspectable.

\begin{conjecture}[C-shelf]
\label{conj:shelf}
The near-optimal set reachable by stable-local optimization is, for
every instance and every $n$, path-connected at the trap scale, so
that no overlap gap forms at that scale. Over the sizes probed the
corrugation floor $\varrho(n)$ along those paths deepens with $n$,
these data not determining the functional form; persistence of the
deepening beyond $n = 16$ is the conjectural part.
Clustering at the near-optimal level is not excluded.
\end{conjecture}

\begin{conjecture}[C-reach]
Restarts are exponentially inefficient substitutes for signal: the
budget required to maintain a fixed fraction of the ceiling
$\delta^*$ grows as $\ln B = \Omega(n)$. The fraction is kept
qualitative because $\delta^*$ is measured only at $n = 8$, and
there from below; the
budget experiments measure directly the growth of reach with
$\ln B$.
\end{conjecture}

\begin{conjecture}[C-hardness, withdrawn]
The landscape defeats stable-local algorithms: no statistic
available to the flow (value, gradient, or their finite histories)
carries enough signal to beat the shrinking reach, even though the
solution set remains connected.
\end{conjecture}

The picture carried three registered falsifiers, one per conjecture.
The third, an optimizer class whose reach ratio to the protocol's
exceeds unity and grows, was triggered by our own measurement: the
(L-BFGS-B, $\sigma$) arm of the preregistered optimizer-class block
exceeds converged Adam at the largest size
(Sec.~\ref{sec:robustness}, Table~\ref{tab:optclass}). Per the
registration's own outcome table
(Appendix~\ref{app:registration}), C-hardness is
\textbf{withdrawn, not weakened}: the falsifier moves from one this
paper proposes to one it reports. What the withdrawal leaves
untouched: the reach of every arm of the optimizer-class block still
shrinks with $n$ at an essentially unchanged rate
(Sec.~\ref{sec:robustness}), so the measured obstruction to
generation stands as a property of the class-so-far, while the
claim that \emph{no} in-class statistic beats the shrinking reach
does not survive a member beating the best previously measured one by
a growing margin.
 \section{Relation to prior work}
\label{sec:related}

On the fit's own error the fixed-base summary of
Sec.~\ref{sec:steepening} sits $4.6$
standard errors above the $1.189$ per qubit of
Ref.~\cite{aaronson2024peaked}, $1.7$ once rescaled by the misfit
($5.9$ and $2.0$ on all eighteen): on this grid and regime the
frozen protocol decays faster, hence reaches less, than their fit,
which averages over a curved trend rather than fixing a base the
data sustain.

Ref.~\cite{aaronson2024peaked} reports an
instance-averaged value at $n = 16$ in the same regime (their
Fig.~3c), a figure point without step budget or tabulated value,
so the comparison is carried two ways. Their fitted law evaluates
to $\approx 0.18$ at $n = 16$
($5\times10^{-4} \times 1.189^{34}$, from their Sec.~3 base and
$n = 50$ estimate), above our raw $0.126$ and corrected $0.139$ on the four
published instances.
Their five published points are digitized onto
Fig.~\ref{fig:scaling} directly (reading method and uncertainty in
the caption): the digitized $n = 16$ point reads
$0.169 \pm 0.021$, above our frozen $0.124$, corrected $0.138$,
and converged $0.131$ alike, all on the eighteen instances. At
$n = 8$, $10$, and $14$ the reading error covers our levels at both
budgets (the $n = 14$ point: $0.238 \pm 0.029$ against our $0.215$
frozen, $0.220$ converged), while at $n = 12$ their digitized point
sits above them. The level statement is therefore made at $n = 12$
and $n = 16$: there their protocol reports reaching more than ours
does at any budget we measured.

Their Sec.~3 describes Adam, their released code
calls L-BFGS-B, and both are members of the class as
registered, compared head to head in the optimizer-class block
of Sec.~\ref{sec:robustness}, where the quasi-Newton member ends a
few percent above
converged Adam at the largest size. The block compares those two
members, not their full pipeline: further schedule tuning or a
sequential warm start is untested here, and a warm start, its
restarts dependent, sits outside the class as registered.
Sec.~\ref{sec:robustness} sizes the quasi-Newton margin against
their $n = 50$ target. In the neighboring regime
$\tau_r = n/2$
they report sizes to $n = 24$ fitted by a fixed exponential (their
Fig.~3d); whether the steepening measured here at $\tau_r = n$
appears there is untested.

The shallow-depth computation of Sec.~\ref{sec:atom} bears on the
cube-root
conjecture of Ref.~\cite{aaronson2024peaked} (their Conj.~3.2): on our
grid the fitted exponent is not depth-independent, falling from $0.28$
to $0.055$ over $\tau_p = 1$--$4$ at $n = 8$, while the value $1/3$ at
$\tau_p = 1$ is not excluded; Appendix~\ref{app:cuberoot} details the
reading tested and its caveats.

The overlap-gap program derives algorithmic hardness from a
topological property (near-optimal solutions cluster, with a
forbidden band of intermediate overlaps) for algorithms \emph{stable}
in a specific sense: the output solution moves by at most $\kappa$
when a small part of the instance is resampled along an interpolation
path (Ref.~\cite{gamarnik2021overlap}, Sec.\ ``OGP is an obstruction
to stability''; the classes certified stable there include low-degree
polynomials, approximate message passing, local algorithms, and
QAOA). The stable-local class of Sec.~\ref{sec:protocol} is a
constraint on optimizer dynamics rather than on instance
sensitivity; no optimizer run here (Adam, plain SGD,
L-BFGS-B, the Haar-initialized arms of Sec.~\ref{sec:robustness}) has
been tested for $\kappa$-stability, and the formal link between the
two notions is open. Sec.~\ref{sec:hardness} reads the connectivity
measurement against this program.

Sec.~\ref{sec:shelf} confronts the trap-based obstructions of
variational landscapes \cite{anschuetz2022traps} with recorded
trajectories and connectivity scans. Sec.~\ref{sec:bp} reads the
barren-plateau prediction of Ref.~\cite{mcclean2018barren} against
the initialization statistics.
 \section{Limitations}
\label{sec:limits}

The equality branch of the first-rung theorem is a density
hypothesis tested numerically, not proved, bounding the exact
maximum at one peaking layer (Sec.~\ref{sec:atom}). The formal
link between the stable-local class and the overlap-gap program's
stability notion is open, which bounds the synthesis
(Sec.~\ref{sec:hardness}). The pacing criterion is silent on
quasi-Newton line search, so the amended class's boundary stays
untested behind the optimizer result (Sec.~\ref{sec:protocol}).

Every result holds for noiseless circuits on one brickwall
ensemble under the fixed parametrization, at the operating regime
unless stated, on the variational manifold only; every reach claim
is quantified over the stable-local class (Sec.~\ref{sec:setup}).
The deep-limit ceiling binds searches of the manifold, not the
generation problem; its separation from the measured reach is one
of scaling form, not magnitude (Sec.~\ref{sec:ceiling}). The
entanglement budget is vacuous at the operating peaking depth and
bites below it, bounding the finite-depth ceiling
(Sec.~\ref{sec:deltastar}). The finite-depth ceiling is measured
at $n = 8$ only, from below, anchoring that result at one size
(Sec.~\ref{sec:atom}). Measured over sizes spanning a factor of
two, the data exclude a bare fixed base, not every exponential
reading: a growing prefactor over a fixed base survives four
intervals (Sec.~\ref{sec:steepening}). The plateau's presence is
measured through $n = 12$, on four sizes (Sec.~\ref{sec:bp}).

At $n \ge 10$ the frozen-budget reach is a truncated optimizer's
(Sec.~\ref{sec:dynamics}). Truncation accounts for ten percent of the
steepening and for part of the $n = 16$ level; the fixed-base
rejection stands at convergence (Sec.~\ref{sec:steepening}). A
vanishing floor-normalized overlap can come from the boundary
dressing alone, and that ambiguity bounds the pairwise
decorrelation of the connectivity result (Sec.~\ref{sec:shelf}). The magnitude of the
moment excess does not travel beyond the initialization probe and
bounds, rather than characterizes, the statics result elsewhere
(Sec.~\ref{sec:probedep}). The exceedance behind the optimizer
result rests on three instances at the largest size, two standard
errors giving $82\%$ rather than $95\%$ two-sided confidence
(Sec.~\ref{sec:robustness}).

A path search is one-sided: it excludes fragmentation at the trap
scale, not clustering at the near-optimal level, and this
one-sidedness bounds the connectivity result
(Sec.~\ref{sec:shelf}). The string resolution uncertainty is
largest at the top of the grid, so the connectivity result's
path levels there are the least resolved (Sec.~\ref{sec:shelf}).
Truncated moment evaluations carry a computed error certificate,
below $1\%$ on the plotted moments and $2.3\%$ of the fourth
moment on the probe scan, bounding the statics result
(Appendix~\ref{app:proofs}).
 \section{Conclusion}
\label{sec:discussion}

We have measured the optimization landscape of variational
peaked-circuit generation. Its statics are exact, from the
finite-depth covariance kernel to the fourth moment of the
peakedness. In the deep limit, no search over a
polynomial-parameter Lipschitz family, exhaustive search included,
beats the trap scale by more than a $\mathrm{poly}(n)$ factor on
average. At finite depth, an entanglement budget bounds the
ceiling, and the first-rung theorem gives the exact maximum at one
peaking layer. No bare fixed base fits the measured reach: the law
steepens. A better optimizer exists; it gains a few percent, and
the peakedness it reaches still falls at nearly the same rate. The
barren plateau is present, yet the second-order amplitude data are
depth-independent while the reach is not. Reachable solutions are
pairwise decorrelated yet path-connected well above the trap
scale. We synthesize the picture as shrinking reach without a
trap-scale overlap gap.

Aaronson and Zhang leave generation as an alternative between a
stalled optimizer and no efficient method \cite{aaronson2024peaked}.
Their extrapolation to experimentally relevant sizes rests on a
fixed decay base, and the measured reach does not sustain it. The
barren plateau does not explain the decay, and fragmentation of
the solution set does not either. Our hardness conjecture met its
registered falsifier when one optimizer exceeded the protocol at
the largest size. We withdraw the conjecture; the withdrawal falls
on the hardness branch, and the margin leaves the reach law
untouched. Neither branch of the alternative closes: a claim of
efficient generation must now beat a registered baseline, and a
claim of hardness must accommodate a solution set connected at the
trap scale.

The open question is the mechanism of the decay. Initialization
statistics are Haar-like, so whatever signal the optimizer
exploits is acquired during optimization, not present at the
start. The withdrawal sharpens the search: some in-class statistic
carries more signal than converged Adam extracts. On the static
side, the flat directions at the solutions are the objective's
gauge in closed form. What fixes the finite-depth ceiling is the
transverse spectrum, measured at the solutions but not modeled.
The connectivity result leaves clustering at the near-optimal
level open. Each surviving conjecture keeps its registered
falsifier. Whether the variational route to verifiable quantum
advantage stays open turns on the persistence of the reach law at
experimentally relevant sizes.
 
\section*{Data and code availability}

\begin{sloppypar}
The analysis code, its committed logs, and the figure-generating
script are archived at
\url{https://doi.org/10.5281/zenodo.22121530}; the raw restart
ensembles ($316$ array archives, $\approx 350$~MB) at
\url{https://doi.org/10.5281/zenodo.22033295}. Both records are to
be read at the version deposited with this submission, which
carries the completed $n = 16$ rows. The archived
\texttt{analysis/verify\_claims.py} recomputes $311$ of the quoted
numbers from those archives, both $n = 16$ readings, the
optimizer-class block with its architecture control, and the
converged campaign included. Its committed log reports zero
disagreements. The geometry of Sec.~\ref{sec:geometry} is covered
separately by \texttt{analysis/probe\_gauge.py}, which recomputes
every count, gap, and speed quoted there from the archived restart
ensembles. The first-rung control of Sec.~\ref{sec:atom} is
\texttt{analysis/firstrung\_inversion.py}, whose committed log
tabulates the seven ground-truth points. The digitized points of
Fig.~\ref{fig:scaling} are read
from the published raster
(\texttt{analysis/az\_fig3c\_digitize.py}), so they are rerun
against that source rather than checked by
\texttt{verify\_claims.py}. The code archive also carries the
committed-before-the-runs \texttt{REGISTRATION-CONVERGED.md}; the
restart archives of the converged grid, the
optimizer-class block, its architecture control, and the
exploratory matched-budget SGD run are in the data record. The
registered predictions rest on
the working record, as declared in Appendix~\ref{app:registration}.
\end{sloppypar}
 \section*{Acknowledgements}

This work began in Calcul Qu\'ebec's 2026 summer quantum computing
program, which proposed the peaked-circuit construction as a study
topic. The author thanks Calcul Qu\'ebec for quantum computing
training and access to computing resources during the program, and
Yuxuan Zhang for comments that sharpened the reach-law and
connectivity statements.

\section*{Author contributions}

The author designed the study, proved the theoretical results, and
executed the numerical campaigns. A large language model was used
for text drafting and editing, for analysis code, and for
verification checks.

\section*{Competing interests}

The author declares no competing interests.
 
\bibliographystyle{quantum}
\bibliography{refs}

% ------------------------------------------------------------ appendix ----
\appendix
\section{Proofs}
\label{app:proofs}

\subsection{The two-copy transfer rules}

At local dimension $d = 4$ the $S_2$ Weingarten coefficients are
$\mathrm{Wg}(e) = 1/(d^2{-}1) = 1/15$ and
$\mathrm{Wg}(s) = -1/(d(d^2{-}1)) = -1/60$
(Corollary~2.4 of Ref.~\cite{collins2006integration}, whose following
example evaluates both), and the Haar twirl of one gate reads
$\Ee[U^{\otimes 2} M U^{\dagger\otimes 2}] = \sum_{\pi, \pi'}
\mathrm{Wg}(\pi\pi'^{-1})\, \Tr[M \Pi_{\pi'}]\, \Pi_\pi$. The three
input classes of Sec.~\ref{sec:kernel} follow by evaluating
$\Tr[M\Pi_{\pi'}]$.

For the first layer, $M = (\ket{00}\bra{00})^{\otimes 2}$ is a pure
product state, so $\Tr[M\Pi_e] = \Tr[M\Pi_s] = 1$ and each output
label carries $\mathrm{Wg}(e) + \mathrm{Wg}(s) = 1/20$. For an
aligned input, $M = \Pi_e$ on the gate's two sites gives
$\Tr[M\Pi_e] = 16$, $\Tr[M\Pi_s] = 4$, hence
$16\,\mathrm{Wg}(e) + 4\,\mathrm{Wg}(s) = 1$ on $\Pi_e$ and
$16\,\mathrm{Wg}(s) + 4\,\mathrm{Wg}(e) = 0$ on $\Pi_s$: the label
passes unchanged, and the case $M = \Pi_s$ is symmetric. For a misaligned
input, $M = \Pi_e \otimes \Pi_s$ gives
$\Tr[M\Pi_e] = \Tr[M\Pi_s] = 8$, so both outputs carry
$8(\mathrm{Wg}(e) + \mathrm{Wg}(s)) = 2/5$ and the total emitted
weight is $4/5$.

The dressing $4^{\#e} 2^{\#s}$ restores stochasticity: a misaligned
input of dressed weight $4 \cdot 2 = 8$ sends $(2/5)(16/8) = 4/5$ to
$ee$ and $(2/5)(4/8) = 1/5$ to $ss$, and a first-layer gate sends
$16/20 = 4/5$ and $4/20 = 1/5$. The boundary of Eq.~\eqref{eq:kernel}
is the swap identity: with $A = \ket{w(\theta)}\bra{w(\theta)}$,
$A' = \ket{w(\theta')}\bra{w(\theta')}$ and $\Pi_c$ the copy
permutation of the final configuration, $\Tr[(A \otimes A')\,\Pi_c] =
\Tr[\rho_{S(c)}(\theta)\,\rho_{S(c)}(\theta')]$: identity sites trace
out, swapped sites pair the probes. The $k$-copy case replaces
$\{e, s\}$ by $S_k$ with Gram matrix
$G_{\pi\pi'} = 4^{\#\mathrm{cyc}(\pi\pi'^{-1})}$ and transfer
coefficients $G^{-1}t$, as stated in Sec.~\ref{sec:moments}; the
brickwall bookkeeping is that of Sec.~2, Eqs.~(16)--(18) of
Ref.~\cite{hunterjones2019unitary}.

\subsection{The truncation certificate}

The certificate is used at every quoted moment. Let
$\Ee[\delta^k] = \sum_c W(c) X_c$ be the exact configuration sum of
Sec.~\ref{sec:moments}, evaluated over a computed subset
$\mathcal{C}$.

\emph{Statement.} For every $\mathcal{C}$,
$\bigl|\Ee[\delta^k] - \sum_{c \in \mathcal{C}} W(c) X_c\bigr|
\le \sum_{c \notin \mathcal{C}} |W(c)|$.

\emph{Proof.} The omitted remainder is
$\sum_{c \notin \mathcal{C}} W(c) X_c$, and $|X_c| \le 1$, $X_c$
being an inner product between two unit vectors of
$(\mathbb{C}^D)^{\otimes k}$; the triangle inequality gives the
bound. $\square$

The right-hand side is accumulated during the sum, so the certificate
is computed rather than estimated. It is held below $1\%$ of the value
at every moment the figures plot, and the figure script prints
each exclusion; on the probe-scan points of Sec.~\ref{sec:probedep} it reaches
$2.3\%$ of $m_4$, whence $r_4$ is quoted there at $\pm 0.02$.

\subsection{Proposition~\ref{prop:ceiling}}

\emph{Setting.} $\ket{\varphi}$ is Haar-random on the unit sphere of
$\mathbb{C}^D$, $D = 2^n$. Each coordinate generates a halved Pauli
word and the derivative of the gate exponential is bounded in norm by
its generator, so $\|\partial_a w\| \le 1/2$ for every $a$ and
$\|w(\theta) - w(\theta')\| \le \|\theta - \theta'\|_1$: the probe map
is $1$-Lipschitz from $\ell_1$. Every coordinate is $2\pi$-periodic up
to a global phase that cancels in $\delta$ (Sec.~\ref{sec:gauge}), so
the maximum over $\mathbb{R}^P$ is the maximum over
$[-\pi, \pi]^P$.

\emph{Proof.} (i) \emph{Pointwise tail.} For fixed unit $w$ and Haar
$\varphi$, $\delta = |\braket{w}{\varphi}|^2$ is Beta$(1, D{-}1)$, so
$\Pr[\delta > t] = (1-t)^{D-1} \le e^{-(D-1)t}$.

(ii) \emph{Net.} A grid of spacing $2h/P$ per coordinate leaves every
$\theta$ within $\ell_1$-distance $h$ of a grid point, so by the
Lipschitz bound its image is an $h$-net of the probe family, of
cardinality $N(h) \le (\pi P/h + 1)^P$. At the radius
$h = 2^{-n/2}$ used in the body,
\begin{equation*}
\ln N \;\le\; P\big[\ln(2\pi) + \ln P + \tfrac{n}{2}\ln 2\big]
\;\le\; \tfrac{3}{2}\, P\,(n + \ln P)
\end{equation*}
for all $n, P \ge 2$.

(iii) \emph{Off-net transfer.} For any $\theta$ and its net point
$\theta_{\rm net}$, the triangle inequality and Cauchy--Schwarz with
$\|\varphi\| = 1$ give
$|\braket{w(\theta)}{\varphi}| \le
|\braket{w(\theta_{\rm net})}{\varphi}| + h$, hence
$\delta(\theta) \le 2\,\delta(\theta_{\rm net}) + 2h^2$.

(iv) \emph{Union bound.} Integrating the tail of (i) over the net,
\begin{equation*}
\Ee\big[\max_{\rm net} \delta\big] \;\le\;
\int_0^\infty \min\big(1, N e^{-(D-1)t}\big)\, dt \;=\;
\frac{\ln N + 1}{D-1}.
\end{equation*}

Combining (iii) and (iv), with $2h^2 = 2^{1-n}$ and
$1/(D-1) \le \tfrac{4}{3}\,2^{-n}$ for $n \ge 2$,
\begin{equation*}
\Ee_\varphi\big[\max_\theta \delta(\theta)\big] \;\le\;
2^{-n}\Big[4\,P\,(n + \ln P) + \tfrac{8}{3} + 2\Big] \;\le\;
7\,P\,(n + \ln P)\,2^{-n},
\end{equation*}
the last step because $P(n + \ln P) \ge 14/9$ for $n, P \ge 2$. This
is the claim, with the explicit constant $C = 7$. $\square$

\subsection{Lemma~\ref{lem:gaussian}}

\emph{Proof.} Write $v = C_1 2^{-n}$, $L = C_2 2^{-n/2}$.

(i) \emph{Phase lifting.}
$\sup_\theta |\psi(\theta)| = \sup_{(\theta,\phi) \in T}
Z_{\theta,\phi}$ with
$Z_{\theta,\phi} = \mathrm{Re}\big(e^{-i\phi}\psi(\theta)\big)$ and
$T = [-\pi, \pi]^P \times [0, 2\pi)$, a real centered Gaussian
process.

(ii) \emph{Canonical metric.} By the $L^2$ triangle inequality,
$d_Z\big((\theta,\phi),(\theta',\phi')\big) \le d(\theta,\theta') +
|\phi - \phi'|_{\circ}\sqrt{v}$, with $|\cdot|_{\circ}$ the circle
distance, at most $\pi$. Since $\Ee|\psi|^2 \le v$ everywhere,
$d(\theta,\theta') \le 2\sqrt{v}$ always, so the $d_Z$-diameter of $T$
is at most $(2+\pi)\sqrt{v}$: the integration range below is
$O(\sqrt{v})$ whatever the parameter-space diameter.

(iii) \emph{Entropy.} A grid of spacing $h/(LP)$ per coordinate in
$[-\pi, \pi]^P$ and spacing $h/\sqrt{v}$ on the phase circle leaves
every point of $T$ within $d_Z$-distance $h$ of the grid, so
\begin{equation*}
\ln N(T, d_Z, h) \;\le\;
P \ln\!\big(1 + 2\pi P L/h\big) +
\ln\!\big(1 + 2\pi\sqrt{v}/h\big).
\end{equation*}

(iv) \emph{Dudley's entropy bound} (Theorem~8.1.3 of
Ref.~\cite{vershynin2018high}):
\begin{equation*}
\Ee \sup_T Z \;\le\; C_D \int_0^{(2+\pi)\sqrt{v}}
\sqrt{\ln N(T, d_Z, h)}\; dh.
\end{equation*}
Substituting $h = \sqrt{v}\, u$ and using
$PL/\sqrt{v} = C_2 P/\sqrt{C_1}$ bounds the integral by
\begin{equation*}
\sqrt{v}\,\sqrt{P}\,\Big(\sqrt{\ln\big(2 + c\, C_2 P/\sqrt{C_1}\big)}
+ c'\Big),
\end{equation*}
with $c, c'$ universal, the $\int_0^1 \sqrt{\ln(1/u)}\, du$ tail being
a constant. For $C_1, C_2$ of order one this reads
$\Ee \sup_T Z \le C_0\sqrt{v\, P \ln(2+P)}$, constants entering
polynomially inside the logarithm.

(v) \emph{Second moment.} $\sup_T Z \ge 0$, since $T$ carries every
phase. On a separable version it is a supremum of linear functionals
of the field with coefficient norms at most $\sqrt{v}$, hence a
$\sqrt{v}$-Lipschitz function of the underlying Gaussian; Gaussian
concentration (Theorem~5.2.2 of Ref.~\cite{vershynin2018high}) gives
$\mathrm{Var}(\sup_T Z) \le C v$, so
$\Ee\big[(\sup_T Z)^2\big] \le (\Ee \sup_T Z)^2 + C v$, the constant
absorbed by $C_0$. With (iv), this gives the claim. $\square$

The true field supplies both hypotheses exactly, $C_1 = C_2 = 1$,
through Eq.~\eqref{eq:hypotheses}, so every separable Gaussian model
carrying the exact second-order amplitude data obeys
$\Ee[\sup_\theta \delta] \le C_0\, P \ln(2{+}P)\, 2^{-n}$.
The proof uses no more Gaussianity than its two tools require.
Dudley's bound asks for sub-Gaussian increments, not for a Gaussian
law. In the deep limit the true field supplies them:
$\psi(\theta) - \psi(\theta') =
\|w(\theta) - w(\theta')\|\braket{u}{\varphi}$ for a fixed unit
vector $u$, and the Beta$(1, D{-}1)$ tail established in step (i) of
the proof of Proposition~\ref{prop:ceiling} is sub-Gaussian at scale
$\|w - w'\|/\sqrt{D}$. The phase lift then carries sub-Gaussian
increments in the metric of step (ii), with an absolute constant.
L\'evy concentration of the $1$-Lipschitz map
$\varphi \mapsto \sup_\theta|\braket{w(\theta)}{\varphi}|$ on the
sphere supplies the variance step. The chaining order therefore
binds the true field in the
deep limit,
$\Ee_\varphi[\max_\theta \delta] \le C_0'\, P \ln(2{+}P)\, 2^{-n}$
with $C_0'$ not explicit; Proposition~\ref{prop:ceiling} keeps the
explicit constant.

\subsection{Theorem~\ref{thm:firstrung}}

\emph{Proof.} At $\tau_p = 1$ the peaking circuit is a single brick
layer covering, for even $\tau_r$, all $n$ qubits by disjoint pairs,
$V(\theta) = \bigotimes_{p=1}^{n/2} V_p(\theta_p)$, a tensor
product of independent gates on the disjoint brick pairs, so
$\ket{w(\theta)} = V(\theta)^\dagger \ket{0^n} =
\bigotimes_p V_p(\theta_p)^\dagger \ket{00}$ is, for every $\theta$, a
product state across the peaking layer's partition. With
$\mathcal{P}$ the set
of unit vectors of that product form,
\begin{equation*}
\delta^*(\tau_p = 1) \;=\; \sup_\theta |\braket{w(\theta)}{\varphi}|^2
\;\le\; \max_{\ket{a} \in \mathcal{P}} |\braket{a}{\varphi}|^2,
\end{equation*}
the maximum on the right attained by compactness of $\mathcal{P}$, a
product of $n/2$ unit spheres: the squared entanglement eigenvalue
$\Lambda_{\max}^2$ of $\ket{\varphi}$ for the brick product
structure, whose complement $1 - \Lambda_{\max}^2$ is the geometric
measure of entanglement of Ref.~\cite{wei2003geometric}. This
inequality is unconditional.

Under the density hypothesis each block orbit
$\{V_p(\theta_p)^\dagger\ket{00}\}$ is dense in the unit sphere of
$\mathbb{C}^4$ up to phase, so $\{\ket{w(\theta)}\}$ is dense in
$\mathcal{P}$ up to phases, which cancel in $\delta$; the supremum
of the continuous $|\braket{\,\cdot\,}{\varphi}|^2$ over a dense
subset of the compact $\mathcal{P}$ is its maximum over
$\mathcal{P}$. The inequality is then an equality. $\square$

\subsection{Proposition~\ref{prop:gauge}}

\emph{Counting.} For qubit $q$, let $\ell_q$ be the number of
layers of $V$ whose brick covers $q$; the segments on $q$ are the
$\ell_q - 1$ consecutive pairs of its gates, so
$S = \sum_q (\ell_q - 1)$. In the even-aligned open-boundary
pattern every even layer covers all $n$ qubits and every odd layer
covers $1, \ldots, n{-}2$, so each of the $n - 2$ interior qubits
carries $\tau_p - 1$ segments, and each edge qubit, covered only by
the $\lceil \tau_p/2 \rceil$ even layers, carries
$\lceil \tau_p/2 \rceil - 1 = \lfloor (\tau_p{-}1)/2 \rfloor$: the
closed form.

\emph{Invariance.} Fix a segment $(q; a, b)$ and $u \in SU(2)$
acting on $q$, embedded as $E(u)$ on the two-qubit space of either
endpoint gate. With $W$ the product of the gates between $G_a$ and
$G_b$, none touching $q$ (at the chain edge, an intervening odd
layer's gates), $E(u)$ commutes with $W$ and
\begin{equation*}
G_b\, W\, G_a
= \bigl(G_b E(u)\bigr)\, W\, \bigl(E(u)^\dagger G_a\bigr):
\end{equation*}
the dressed pair realizes the same $V$, both dressed gates having
unit determinant. Where the fifteen-rotation word is regular (rank
$15$) at each gate, the word map is a local diffeomorphism onto a
neighborhood in $SU(4)$, so for $u$ near the identity the dressing
lifts to smooth angle paths by the inverse function theorem; over
all segments this defines, on a neighborhood of the identity in
$SU(2)^S$, a smooth $\Phi$ into $(\mathbb{R}/2\pi\mathbb{Z})^P$ with
$\Phi(\mathbf{1}) = \theta$ and $V \circ \Phi$ constant.

\emph{Dimension.} A direction $(X_s)_s \in \mathfrak{su}(2)^S$
moves gate $g$ by
$\delta G_g = -\sum_{s \in \mathrm{out}(g)} E(X_s)\, G_g
+ G_g \sum_{s \in \mathrm{in}(g)} E(X_s)$, the sums over the
segments leaving and entering $g$. The word map being a local
diffeomorphism at each gate, $\mathrm{d}\Phi(X) = 0$ forces every
$\delta G_g = 0$, that is
$\sum_{\mathrm{out}(g)} E(X_s)
= G_g \bigl(\sum_{\mathrm{in}(g)} E(X_s)\bigr) G_g^\dagger$, an
equality between the local algebra
$\mathfrak{l} = \mathfrak{su}(2) \oplus \mathfrak{su}(2)$ and its
conjugate $\mathrm{Ad}(G_g)\,\mathfrak{l}$. The count
$\dim \mathfrak{l} + \dim \mathfrak{l} = 12 < 15 =
\dim \mathfrak{su}(4)$ makes a trivial intersection possible;
trivial intersection at every gate is the \emph{generic-position
condition}, and it fails on local gates. The condition is generic:
its failure is the vanishing of every $12 \times 12$ minor of the
concatenated basis matrix
$[\,\mathfrak{l} \mid \mathrm{Ad}(G_g)\,\mathfrak{l}\,]$, a
real-analytic condition on $SU(4)$. Any single transverse point
rules out identical vanishing and confines the failure to a closed
measure-zero subvariety. Under the condition both sides vanish;
the segments entering one gate sit on distinct wires, so each
$E(X_s) = 0$ separately, and every segment enters some gate. Hence
$\mathrm{d}\Phi$ is injective and
the family has dimension $3S$. The corank bound follows: the $3S$
tangent directions lie in the kernel of the projective Jacobian of
the output state, whose rank never exceeds the real dimension
$2 \cdot 2^n - 2$ of the ray space. $\square$

Regularity of the word and the generic-position condition hold off
a closed measure-zero set, and neither is assumed where it
matters: at the atom solutions and at the random points of
\texttt{analysis/gauge\_dimension.log} the corank is computed, and
equals the bound exactly.

\subsection{Proposition~\ref{prop:probegauge}}

\emph{Counting.} The closing structure sits in the last two
layers. For $\tau_p$ odd the final layer is even-aligned and
covers all $n$ qubits, so $B_2 = n/2$, $B_1 = 0$; for $\tau_p$ even it
is odd-aligned and covers $1, \ldots, n - 2$, so $B_2 = n/2 - 1$
while the two edge qubits close at their gates of the layer
before, $B_1 = 2$. No other gate carries a closed qubit.

\emph{Factorization.} A closed qubit being touched by no later
gate, $\bra{0^n} V$ factorizes at the boundary: a gate $u$ with
two closed qubits enters only through its row $\bra{00} u$, one
tensor factor of $\bra{0^n} V$, and a gate with one closed qubit
only through its $2 \times 4$ block
$(\bra{0} \otimes \mathbb{1})\, u$. A phase on any one factor is
an overall phase of $\ket{w}$, invisible to the ray.

\emph{Two local counts.} In the left-invariant frame
$\delta u = X u$, $X \in \mathfrak{su}(4)$, the row moves along
its own ray iff $\bra{00} X \propto \bra{00}$, a $u$-independent
condition since $u$ is invertible, which kills the three off-ray
entries of the first row (six real conditions) and leaves the
anti-Hermitian $3 \times 3$ block tied to the corner entry by the
trace: nine dimensions,
$9 = 15 - \dim_{\mathbb{R}} \mathbb{CP}^3$. The block moves along
its own ray iff
$(\bra{0} \otimes \mathbb{1})\, X \propto \bra{0} \otimes
\mathbb{1}$, which in $2 \times 2$ blocks forces the diagonal
upper block to $c\,\mathbb{1}$ and the off-diagonal blocks to
zero, leaving the anti-Hermitian lower block tied to $c$ by the
trace: four dimensions, $4 = 15 - 11$, eleven being the real
dimension of the Stiefel manifold $V_2(\mathbb{C}^4)$ modulo the
block phase. Both dimensions are recomputed by direct SVD in the
archived script. Word regularity lifts these local kernels to
$\theta$ directions, along with the $3S$ family of
Proposition~\ref{prop:gauge}, which fixes $V$ and a fortiori the
probe ray.

\emph{Independence.} Let a combined direction assign $X_s \in
\mathfrak{su}(2)$ to each segment and $Z_g \in \mathfrak{n}_g$ to
each closing gate, $\mathfrak{n}_g$ the local kernel above, and
suppose the total $\theta$ motion vanishes. Word regularity forces
every gate motion to vanish:
$\sum_{\mathrm{out}(g)} E(X_s) - Z_g
= \mathrm{Ad}(G_g)\bigl(\sum_{\mathrm{in}(g)} E(X_s)\bigr)$, with
$Z_g = 0$ at non-closing gates. Induct up the layers. At a gate
whose entering segments are already zero the right side vanishes;
if the gate is not closing, its outgoing insertions sit on
distinct wires, so each $X_s = 0$ separately. If it closes one
qubit, the single outgoing insertion sits on the open wire,
$E(X_s) = \mathbb{1} \otimes X_s = Z_g$, and membership of
$\mathfrak{n}_g$ forces the diagonal blocks scalar:
$X_s = c\,\mathbb{1}$ with $X_s$ traceless, so $X_s = 0$ and
$Z_g = 0$; this covers the segments whose two endpoints are both
closing gates. If it closes two qubits it has no outgoing segment
and $Z_g = 0$ outright. The base case is the first layer, which
has no entering segments. Hence the combined family is injective
at first order and the kernel has dimension at least
$3S + 9B_2 + 4B_1$. $\square$

\emph{Constant rank and exact flatness.} The construction bounds
the corank below at every regular $\theta$, so the probe-ray
Jacobian has rank at most $P - 3S - 9B_2 - 4B_1$ there. Rank is
lower semicontinuous: where the measured corank equals the bound
the rank is maximal, hence constant on a neighborhood, and the
constant-rank theorem provides adapted coordinates in which the
probe-ray map reads $(x, y) \mapsto x$. The fiber through the
point is a smooth submanifold of dimension $3S + 9B_2 + 4B_1$, and by
Eq.~\eqref{eq:field} $\delta$ is constant along it to all orders;
at a critical point the Hessian therefore carries at least that
many exact zero eigenvalues, whatever the transverse geometry
does.

Regularity of the word again holds off a closed measure-zero set;
the generic-position condition of Proposition~\ref{prop:gauge} is
not needed, the layer recursion replacing the transversality
argument. Nothing is assumed where it matters: the corank is
computed, and equals the bound, at the Haar points and the
archived best restarts of \texttt{analysis/probe\_gauge.log}.
 \section{Methods and reproducibility}
\label{app:methods}

\emph{Simulation.} Exact statevector simulation throughout
(PennyLane $0.45.1$, \texttt{default.qubit}
\cite{bergholm2018pennylane}); gates as specified in
Sec.~\ref{sec:field}; gradients by backpropagation. Instances and
restarts derive from a root seed via spawn keys
$(n, \text{instance})$ and
$(n, \text{instance}, \text{restart}, \sigma)$; each ensemble
regenerates bit-for-bit on its original architecture (optimizer
trajectories, chaotic over $400$ steps, are not bit-portable across
CPU architectures). The bottom panels of Fig.~\ref{fig:intro} are
regenerated from the archived angles and instance seed, reproducing
the archived $\delta$ values to $10^{-14}$.

\emph{Exact computations.} The two-copy kernel, the gradient
covariance, and the $S_3$/$S_4$ moment transfers are exact
summations with per-script self-tests: transfer-invariant
preservation to $10^{-12}$ (kernel) and $10^{-9}$ ($S_3$/$S_4$),
Haar limits to $10^{-9}$ (second moment) and $10^{-8}$ ($k = 4$
floor), independent Monte Carlo pipelines at small $n$, derivative
checks against fourth-order finite differences at $10^{-8}$, and
cross-evaluator agreement (dense gather versus contraction-path
evaluation, configuration by configuration). Truncated moment
evaluations report a certified error bound
$\sum |W_{\rm dropped}|$ (with $|X_c| \le 1$); its scope at every
quoted moment is stated in Appendix~\ref{app:proofs}.

\emph{Hessians and coranks.} The Hessian of $\delta$ is built from
central differences of the analytic gradient (the exact derivative
probe states), step $10^{-4}$; the asymmetry of the raw matrix,
$\sim 10^{-12}$, is its built-in error gauge. Coranks are numerical
ranks of the projected Jacobians: threshold
$10^{-8} \sigma_{\max}$ for the output state and the machine-zero
threshold $10^{-12} \sigma_{\max}$ for the probe ray, whose kernel
sits at $\sim 10^{-15}$ relative while its smallest nonzero
singular value dips to $\sim 10^{-9}$ at one shallow depth-scan
point; the bracketing singular pair is printed at every
measurement, so no threshold arbitrates a count. Solutions are the
archived best restarts by peakedness, plus one mid-band restart at
$(8, 4)$. Polishing runs L-BFGS-B on
$-\delta$ from the archived stop to $\|\nabla\delta\| < 10^{-7}$;
it is a stationarity diagnostic, outside the frozen protocol, and
no reported reach level rests on a polished value.

\emph{Campaign.} The consolidation ran on cloud spot instances as a
declarative, resumable job list with an integrity gate after each
block. A registered catch-up block later took $n = 16$ from four
instances to eighteen under the same frozen protocol, bit for bit;
per the registration, every $n = 16$ number is reported on both
sets, whichever way the change goes, and the four-instance value is
not dropped. Budget, robustness, and connectivity
protocols were fixed before the runs that tested them.
Appendix~\ref{app:registration} records their provenance tiers and
their verdicts.

\emph{The matched-budget protocol.} The unit shared by every arm of
the optimizer-class block is one value-and-gradient evaluation of
$\delta(\theta)$; for L-BFGS-B each line-search probe is charged to
the same budget, an asymmetry running against the quasi-Newton arm,
which buys fewer iterations than Adam has steps. The converged
schedule extends the frozen protocol along a doubling ladder of
caps ($400, 800, \ldots, 12{,}800$ steps) with the learning-rate
decay floored at $0.05 \times 2^{-3}$, stops when the relative gain
of a doubling falls below $0.002$, and finishes with $400$ polish
steps: a matched ceiling of $13{,}200$ evaluations, $33\times$ the
frozen budget. Appendix~\ref{app:controls} reports the control
sweep of this learning-rate floor.

\emph{The optimizer-class block.} The block closes the two-by-two
of optimizer $\in$ \{Adam, L-BFGS-B\} and initialization $\in$
\{normal, Haar\}: arms (L-BFGS-B, $\sigma$), (L-BFGS-B, Haar) and
(Adam, Haar) at $B = 16$ restarts on instances $0$--$2$ per size,
the (Adam, normal) cell being the converged grid read on the same
instances. The registered statistic is the reach ratio at matched
instances and matched budget,
$\rho_{\rm arm}(n) = R_{\rm arm}(n)/R_{\rm conv}(n)$, both sides
read at $B_0 = 16$ through the exact without-replacement estimator
$\Ee[\max \text{ of } B_0]$, the sample maximum on a $16$-restart
arm and a genuine expectation on the $32$-restart converged cell,
the same estimand at different variance. The ratio is formed per
instance and averaged, with the standard error over the three
instances; the \emph{literal} construction, mean over mean, is
computed
alongside. Appendix~\ref{app:registration} states the outcome
rules with their tie-break, and reports the architecture control.

\emph{Optimizer controls.} The control matrix of
Sec.~\ref{sec:robustness} is the frozen protocol at $n = 10$,
instance $0$, $B = 200$, in five variants: initialization scale
$0.5$ and $1.0$ (against $0.1$), learning rate $0.025$ and $0.1$
(against $0.05$), and plain SGD at the protocol's $400$ steps and
learning rate. The pacing diagnostic pairs twelve Adam and SGD
restarts from identical initializations and records the gradient
norm and parameter displacement at every step; the medians quoted
in Sec.~\ref{sec:robustness} are per-restart medians along the
trajectory, aggregated as medians over the twelve pairs. The
matched-budget SGD run quoted there is exploratory, not registered:
plain SGD under the converged schedule of this appendix
($13{,}200$-evaluation ceiling), $B = 16$, on the same instance.

\emph{The fixed-base test.} The fixed-base $p$ carried by the
introduction comes from the covariance-estimated
exact test: Hotelling's $T^2$ on the two contrasts of the
per-instance $\ln$ reach at $B_0 = 16$ on each grid, orthogonal to
an affine law in $n$, referred to
$F_{2,16}$ over the eighteen instances, $p = 0.026$ frozen and
$0.041$ converged (\texttt{analysis/\allowbreak reach\_exact\_test.log}); the
weighted $\chi^2$ of
\texttt{analysis/\allowbreak converged\_reach.log}, its
instance-mean errors treated as known, reads $0.011$ and $0.025$.
The exact-test log also decomposes the last budget doubling at
$n = 16$ per instance: nine tenths of its $0.36\%$ gain sits in a single
instance.
 \section{Convergence studies and numerical controls}
\label{app:controls}

\emph{Step-budget control.} To size what the $400$-step cap costs,
the same restart seeds were rerun at a cap of $1600$ steps, three
instances per size and sixteen restarts each ($n = 16$: one
instance). The deficit in the best reachable peakedness grows with
size,
\begin{equation*}
0.1\%,\quad 2.3\%,\quad 3.1\%,\quad 4.7\%,\quad 10.9\%
\end{equation*}
at $n = 8$--$16$ (per-size means of the per-instance best-restart
gains, which span $0.0$--$0.4\%$ at $n = 8$ and reach $6.8$,
$6.4$, $8.6$ and $10.9\%$ at $n = 10$--$16$). The median number of
steps used at the $1600$ cap is non-monotone across instances
($382$--$749$ at $n = 8$, $1226$--$1489$ at $n = 14$, $1090$ at
$n = 16$), and a residual fraction of restarts still ends at the
extended cap, $2/48$ at $n = 10$, $9/48$ at $n = 12$, $17/48$ at
$n = 14$, and $3/16$
at $n = 16$: the control is not fully converged either, so the
measured deficits are lower bounds. The deficit's magnitude matters
for comparing against Ref.~\cite{aaronson2024peaked} and for the
$n = 16$ shortfall, and Sec.~\ref{sec:steepening} quotes the raw,
the corrected, and the propagated figures. Whether its curvature
carries the steepening is settled by the registered log-linearity
test on the converged campaign's own within-trajectory deficits,
five sizes and three degrees of freedom as registered, which
rejects linearity ($\chi^2 = 16.0$, $p = 0.0011$,
\texttt{analysis/convergence\_diagnostics.log}); the
correction-based defense of the steepening is therefore withdrawn,
and Sec.~\ref{sec:steepening} rests the statement on the converged
grid directly.

\emph{Converged campaign.} The registered convergence rerun
(\texttt{REGISTRATION-CONVERGED.md}, committed before its runs)
repeats the full $18$-instance grid at $n = 8$--$16$ with
$B = 32$ restarts under the converged schedule of
Appendix~\ref{app:methods}.
Its registered acceptance criterion, a gain of at most $0.5\%$ over
the last budget doubling, is met at every size: the largest reading
is $0.107\%$ at $n \le 14$ and $0.362\%$ at $n = 16$, so the
numbers count as converged in the registered sense and no
Richardson band is published anywhere on the grid. At $n = 16$ that
last gain sits above the previous doubling's $0.288\%$, so the
registered ratio $r$ falls outside $(0, 1)$ and the band is
undefined there; per the registration the raw value and the last
gain are then reported alone. The registered statistic is the exact
expected best of $16$ restarts per instance, its error the standard
error over the eighteen instances
(\texttt{analysis/converged\_reach.log},
\texttt{analysis/convergence\_diagnostics.log}). Per the
registration, the $n = 16$ row enters the reported table and the
full-grid statistic but not the registered $n \le 14$ test.

\emph{The learning-rate floor.} The floor of the converged schedule
(Appendix~\ref{app:methods}) prevents the decay from freezing the
optimizer short of a stationary point, and its value does not
select the answer: a registered sweep over a factor $16$
($0.05 \times 2^{-2}$, $2^{-3}$ and $2^{-6}$) at $n = 12$,
instance $0$, same seeds and same rungs, reaches $0.4080940$,
$0.4080941$ and $0.4080941$, the same optimum to $10^{-7}$ at the
same median $3200$ steps
(\texttt{results/lr\_floor\_sweep/}). The optimizer-class block
therefore bears on the registered class's boundary.

\emph{The ALS restart count.} The block-ALS $\Lambda_{\max}^2$ of
Sec.~\ref{sec:atom} rests on twelve restarts per point. Raising
the count eightfold at $n = 8$, instance $0$, moves the value by
$-3.6\times10^{-16}$, leaving $0.162661$ at both counts
(\texttt{analysis/geometric\_measure\_restarts.log}). The
ground-truth side of the first-rung comparison is therefore
converged in the restart count, and the shortfalls of
Sec.~\ref{sec:atom} are the protocol's.

\emph{String-method diagnostics.} Two quantities gate the
connectivity runs. A path is \emph{valid} when the
equal-arc-length reparametrization has converged, taken as no
segment exceeding twice the mean spacing; invalid paths are
reported and excluded, and all $180$ paths behind the reported
corrugation law are valid. The \emph{adjacent-waypoint overlap},
the smallest squared overlap between consecutive waypoint states,
measures whether the polyline resolves state space: values of order
$10^{-2}$ mark a string stepping across unresolved gaps. Over those
$180$ paths it spans $0.562$--$0.935$, per-size medians $0.894$,
$0.829$, $0.773$, $0.699$ and $0.667$ at $n = 8$--$16$, so the
least-resolved path sits a factor $56$ above the unresolved regime.
Segment count is matched at $64$ across the grid, so the residual
decline of the diagnostic with $n$ is a property of the paths
rather than of the schedule. The one resolution test, $16$ to $32$
segments at $n = 8$, lowered the median $\varrho$ by $15.6\%$ while
ten of twelve paths fell and the two least-resolved rose. The top
of the grid carries the smallest diagnostic and therefore the
largest resolution uncertainty.
 \section{Registration: predictions, verdicts, and provenance}
\label{app:registration}

\subsection{Provenance tiers}

Of the campaign's protocols (Appendix~\ref{app:methods}), only
the connectivity scans' corrugation protocol, statistic, and
verdict rule sit in the archived \texttt{REGISTRATION.md}; the
moment bands below and the four predictions below
were fixed in the author's working record
before the consolidated analysis, so their antecedence rests on
that declaration, the public archive postdating the runs. The
optimizer-class block of Sec.~\ref{sec:robustness} is registered
differently: its arms, statistic, and the cost of each outcome are
fixed in the archived \texttt{REGISTRATION-CONVERGED.md},
committed before any of its runs existed, so its antecedence rests
on the repository history rather than a declaration. One archived
output, \texttt{analysis/campaign\_verdicts.log}, predates the
readings of Sec.~\ref{sec:shelf} and the verdicts below and
diverges from them twice. It grades prediction 1 confirmed on the
separate-grid reference whose ratios Sec.~\ref{sec:budget} reports
in parentheses, $0.69$ and $0.75$, where the same-ensemble
reference gives $0.63$ and $0.61$ and
the verdict below counts that fall as failing the prediction
as stated; its printed label for that reference reads $[6, 200]$,
while the slopes it cites are fitted on $[25, 200]$
(\texttt{analysis/budget\_scan.py}). And it carries the corrugation
rate clause on the asymptotic $p$ that Sec.~\ref{sec:shelf}
calibrates. The manuscript supersedes it on both.

\subsection{The moment bands}

The record carries the fourth-moment exponent band of
Sec.~\ref{sec:moments}, $\ln m_4/\ln m_2 \in [4.5, 6)$ with the
upper half favored; both measured values land inside it, the
$n = 8$ one in the favored half. A companion registration on the
third moment at
$(n, \tau_r) = (8, 8)$ carried two bands, $m_3 \in [2.0, 2.35]$ and $\ln m_3/\ln m_2 \in [2.5, 3]$. The
numeric band was missed, the measured $m_3 = 1.549$ falling below
it. The miss traces to a units error. Both bands are
Eq.~\eqref{eq:envelope} at $k = 3$, $m_3 = m_2^3$. The numeric one
evaluated that cube at the variance ratio $1.33$ instead of at
$m_2$ itself, the flat mean tying the two exactly
($\mathrm{Var}(\delta)/D^{-2} = 2m_2 - 1$). Converted, the same
registered number gives $m_2 = 1.165$ and predicts $1.58$, within
$2.1\%$ of the measurement. The band counts as failed nonetheless,
the correction postdating the measurement. Meeting the exponent
band at $2.88$ offsets nothing, the two bands not being
independent.

\subsection{The four predictions and their verdicts}

Before the consolidated campaign ran, four predictions with
quantitative falsifiers were registered against it.

\begin{enumerate}
\item \emph{Budget-800.} $R(n, B)$ keeps its measured log-budget
slope to $B = 800$; a re-emerging saturating atom at $n = 12$
falsifies C-reach. \textbf{Falsified as stated; falsifier not
triggered, so C-reach stands}: no degenerate maximum re-emerges at
$B = 800$, and the slope falls to $0.61$ to $0.63$ of its own
$[25, 200]$ reference over the same ensembles, still
$\approx 9\times$ the i.i.d.-exponential null at $n = 12$
(Sec.~\ref{sec:budget}).
\item \emph{The $n = 16$ point.} The mean-of-best continues the fixed
$1.195^{-n}$ law within consolidated errors. \textbf{Falsified as
stated}, superseded by a steepening law (Sec.~\ref{sec:steepening}).
At fixed budget this is what $\ln B = \Omega(n)$ would produce: a
stronger form of hardness than the one registered.
\item \emph{Robustness.} The profile is optimizer-independent within
the stable-local class. \textbf{Falsified as registered}: the class
as registered carries no pacing criterion, plain SGD is inside it,
and at the protocol's step budget its profile is not the protocol's
(Sec.~\ref{sec:robustness}). The grade
follows the standard of the moment bands above:
the pacing criterion that would excuse SGD postdates the SGD run and
buys no grade here. Confirmed for hyperparameter variation of the
frozen protocol (initialization scale $\times 10$, learning rate
$\times 4$, one instance at $n = 10$). Two measurements bound what
survives: paired trajectories locate SGD's failure in its pacing, not
in the landscape, and the preregistered optimizer-class block of
Sec.~\ref{sec:robustness} tests the boundary upward, across distinct
optimizers the registered class contains. That block returns
$\rho_{\rm arm} \le 1$ within errors through $n = 14$ and an
exceedance at $n = 16$ for (L-BFGS-B, $\sigma$)
(Table~\ref{tab:optclass}): the prediction fails in both
directions, downward by SGD's pacing at the frozen budget, upward by
a quasi-Newton member exceeding converged Adam at the largest size.
The upward exceedance triggers the third falsifier of C-hardness and
withdraws it (Sec.~\ref{sec:hardness}).
\item \emph{Corrugation scaling.} $\varrho(n)$ extrapolates the
$0.68 \to 0.5$ trend of the first-pass strings polynomially
(instance $0$: the minimum best-path ratio at $n = 8$ and the
resolved-band midpoint at $n = 12$, two statistics the consolidated
design no longer measures). \textbf{Falsified as stated}: on the
registered statistic of the consolidated sweep the floor falls to
$0.226$ at $n = 16$, far below that trend, and connectivity itself
persists (Sec.~\ref{sec:shelf}).
\end{enumerate}

\subsection{The optimizer-class outcome rules}

The rule that the costlier of the two
readings is reported was fixed in the analysis code
(\texttt{analysis/\allowbreak optclass\_reach.py}, commit
\texttt{c883ea6},
2026-08-09) before the $n = 16$ archives invoking it existed
(2026-08-10/11): the git history is the timestamp. The
optimizer-class block (Appendix~\ref{app:methods}) is the first
place the two
constructions part company, the paired statistic triggering the
growing-excess branch and the literal one not, and the costlier
reading is the one reported
(Table~\ref{tab:optclass}). The three outcomes and their costs were
fixed in \texttt{REGISTRATION-CONVERGED.md} before the data
existed: $\rho \le 1$ within errors at every size (the ceiling is
the class's); $\rho > 1$ at two standard errors at some size but
flat in $n$ (every absolute reach value is relabeled as Adam's and
the quantitative comparison with the fitted base of
Ref.~\cite{aaronson2024peaked} is withdrawn); or $\rho > 1$ and
growing with $n$ (the third registered falsifier of C-hardness is
triggered by our own measurement and C-hardness is withdrawn, not
weakened). The measured outcome is the third
(Secs.~\ref{sec:dynamics} and~\ref{sec:hardness}). The registered
threshold is two standard errors, applied as registered; what the
growth criterion is worth, and the statistical weight of the
deciding cell, are stated with the measurement in
Sec.~\ref{sec:robustness}, whose per-instance ratios are in
\texttt{analysis/optclass\_reach.log}.

\subsection{The architecture control}

The \texttt{ENV} files record a machine-type substitution within
the registered architecture, \texttt{n2-standard-96} for
\texttt{n2-highcpu-96}. The provenance deviation that matters is
disclosed and controlled rather than assumed harmless. The $n = 8$--$14$ arms ran on Apple silicon and
the $n = 16$ cells on the campaign's x86 machines, the converged
denominator being x86 throughout (\texttt{ENV} files in
\texttt{results/optclass/}): the one size whose ratio is formed
within a single architecture is the size that decides the verdict,
and the switch falls between $n = 14$ and $n = 16$, where $\rho$
moves most. A dedicated control measures the switch instead of
assuming it away: the deciding arm recomputed on x86 at $n = 12$
under identical settings, its reading rule committed before the
control ran (\texttt{cloud/runner.py}, commit \texttt{055cb31}).
The two runs produce distinct trajectories, per-restart weights
differing by up to $1.2\times 10^{-2}$ with different step counts,
yet the per-instance bests coincide to six decimals:
$\rho_{\rm x86}(12) = 0.9988 \pm 0.0058$ against
$\rho_{\rm ARM}(12) = 0.9988 \pm 0.0058$. The two figures are equal
because the optima are, and $z = 0.00$ records that coincidence
rather than a powered test; the committed rule ($z < 2$) closes the
confound. The architecture
moves the paths, not the optima, the endpoint degeneracy of
Sec.~\ref{sec:shelf} doing the controlling, so the confound between
architecture and size does not carry the optimizer-class verdict.
The registration file is not amended.
 \section{The cube-root conjecture at fixed depth}
\label{app:cuberoot}

We test the instance-exact form of the cube-root conjecture of
Ref.~\cite{aaronson2024peaked} (their Conj.~3.2):
$\delta^* \simeq b_{\rm triv}^{\gamma}$ with $\gamma$
independent of the peaking depth, where $b_{\rm triv}$ is the
peakedness of the same instance truncated to its first
$\tau_r - \tau_p$ random layers. Their conjecture leaves the
exponent a free constant. Their text reports it empirically as almost
a cube root, and $\gamma = 1/3$ is the value we test.
Ref.~\cite{aaronson2024peaked}
states the conjecture asymptotically, in the mean, and at fixed ratio
$\tau_p/\tau_r$, a ratio the depth scan varies, so what follows
constrains this per-instance, fixed-depth reading rather than the
original claim. On our grid the fitted exponent
$\gamma = \ln \delta^* / \ln b_{\rm triv}$ is not depth-independent:
as a mean over three instances it falls
$0.28 \to 0.17 \to 0.08 \to 0.055$ over $\tau_p = 1$--$4$ at $n = 8$,
and $0.19 \to 0.11 \to 0.035$ over $\tau_p = 1$--$3$ at $n = 6$. The
per-instance values at $(n, \tau_p) = (8, 1)$ are $0.215$, $0.327$ and
$0.292$: one instance sits at $\approx 1/3$ before the fall, so these
data exclude depth-independence of $\gamma$ at these sizes, not the
value $1/3$ at $\tau_p = 1$. Two sizes and four depths cannot speak to
an asymptotic statement.
 
\end{document}